\documentclass[11pt]{article}

\usepackage[utf8]{inputenc}
\usepackage[T1]{fontenc}
\usepackage{amsmath,amssymb,amsthm}
\usepackage{mathtools}
\usepackage{bm}
\usepackage{graphicx}
\usepackage[margin=1in]{geometry}
\usepackage[colorlinks=true,linkcolor=blue,citecolor=blue,urlcolor=blue]{hyperref}
\usepackage{booktabs}
\usepackage{caption}
\usepackage{float}
\usepackage{enumitem}
\usepackage{xcolor}
\usepackage{physics}

\newtheorem{theorem}{Theorem}
\newtheorem{lemma}{Lemma}
\newtheorem{corollary}{Corollary}
\newtheorem{proposition}{Proposition}
\newtheorem{definition}{Definition}
\newtheorem{hypothesis}{Hypothesis}
\theoremstyle{remark}
\newtheorem{remark}{Remark}

\newcommand{\rhat}{\hat{r}}
\newcommand{\ahat}{\hat{a}}
\newcommand{\Fclass}[2]{\mathcal{F}^{(#1)}_{#2}}
\newcommand{\Kerr}{\mathrm{Kerr}}
\newcommand{\ordc}{\mathrm{ord}_c}
\newcommand{\Var}{\mathrm{Var}}
\newcommand{\Sym}{\mathrm{Sym}}
\newcommand{\nustar}{\nu^{\star}}
\newcommand{\Cz}{\mathcal{C}_0}

\title{\textbf{Computational Superiority of Non-Markovian Kerr Feedback\\
in Continuous-Variable Quantum Reservoir Computing}}
\author{Daniel Soh\\
\small Wyant College of Optical Sciences, University of Arizona}
\date{May 26, 2026}

\begin{document}
\maketitle

\begin{abstract}
A linear optical medium can delay, mix, and superpose light, but it can never make
two pulses of light multiply one another: multiplication is nonlinear, and a linear
system has no such operation. This is the physical root of a sharp limitation on
continuous-variable quantum reservoir computers (QRCs) built from Gaussian
optics---they cannot, within the reservoir, form genuine products of the input at
different past times, the cross-time nonlinear correlations that many temporal
computations require. They can only fake such a product by storing each past input
separately and multiplying in the readout, which forces an exponentially harder
high-order measurement. We show that adding a single Kerr (intensity-dependent
phase) element in a time-delayed feedback loop removes this limitation, and does so
with a striking economy. Because the Kerr effect makes light's phase depend on its
own intensity, it performs a true multiplication inside the medium; and because the
feedback makes the light revisit that element many times, one physical mode mixes
its own history against itself once per round-trip. Feedback thereby turns time into
space: $D$ passes through one nonlinear mode substitute for $D$ parallel linear
modes. We prove this as an unbounded resource separation (Theorem~\ref{thm:scaling},
Corollary~\ref{cor:unbounded}): a Gaussian reservoir of $N$ modes can reach
cross-time nonlinear structure of ``rank'' at most $2N$---a ceiling set purely by
its hardware---while a single Kerr mode reaches rank equal to its feedback depth
$D$, a free parameter that costs no extra modes. For every $N$ there is a
computation one Kerr mode performs that no $N$-mode linear reservoir can; no finite
linear reservoir suffices. A counterintuitive ingredient makes the mechanism work:
loss. Each round-trip dims the light slightly, so the nonlinear phase it accumulates
differs from pass to pass, giving every echo its own fingerprint; without loss the
passes would be identical and redundant. We confirm the activation of the effect on
an exact open-system simulation, identify the single-photon Kerr-to-loss ratio as
the governing figure of merit, and ground the separation in a recognized task,
nonlinear channel equalization, where the required computational rank is exactly a
channel's nonlinear memory complexity. The benefit is unbounded in principle and
capped in practice by physics: achievable feedback depth is $D\sim30$--$230$ on
integrated platforms, so one nonlinear mode replaces up to $\sim10^2$ linear ones,
at the price of additional measurement time---the faint per-mode imprint must be
read out over many shots. The architecture thus trades a hardware resource (modes,
detectors, chip area) for a measurement resource (shots), a favorable trade exactly
where mode count is the binding constraint. We are explicit throughout about the
boundary between the unbounded separation we prove and the physically-capped reach
of any real device.
\end{abstract}

\tableofcontents
\clearpage

\section{Introduction}
\label{sec:intro}

Reservoir computing processes time-varying signals by letting them drive a fixed
physical system---the reservoir---and training only a simple readout of the
system's state. The appeal is hardware: the reservoir itself need not be tuned, so
a well-chosen physical medium does the hard part of the computation for free.
Continuous-variable quantum optics is an attractive medium, and a sequence of
results has established that even a purely \emph{Gaussian} optical reservoir---beam
splitters, squeezers, and linear loss---is a universal approximator of fading-memory
maps when the input is encoded suitably~\cite{nokkala2021}. Universality, however,
says nothing about \emph{cost}: it guarantees that a sufficiently large reservoir
can approximate a target, not that a small one can. The question this paper answers
is what a \emph{small} reservoir can do, and what a single nonlinear element buys.

\paragraph{Why a linear reservoir struggles with cross-time products.}
Many temporal computations require the system to form a \emph{product of the input
at two different past times}---to compute, say, $u_{k-2}\,u_{k-4}$, or any
cross-time nonlinear correlation. A Gaussian reservoir cannot do this within the
reservoir, and the reason is physical, not technical: a Gaussian system evolves its
light by linear maps and adds noise linearly, so the only correlation it can create
between two times is the one a linear filter makes---a second-order covariance. It
has no operation that multiplies the field at time $A$ by the field at time $B$,
because multiplication is nonlinear and the medium is linear. The only recourse is
to keep both past inputs as separate stored features and multiply them in the
\emph{readout}. But a product of $\nu$ stored quantities is a degree-$\nu$ readout
feature, and reading a degree-$\nu$ polynomial of a quadrature to fixed precision is
exponentially expensive in $\nu$. So a linear reservoir does not lack the
computation outright---it pays for it in measurement, at the back end, where it
hurts most.

\paragraph{Why one Kerr mode in a loop escapes this---and acts like many modes.}
A Kerr element makes light's phase depend on its own intensity; this is a genuine
multiplication of the field by itself, performed \emph{inside} the medium rather
than deferred to the readout. Placed in a time-delayed feedback loop, the light
circulates and passes through the Kerr element repeatedly, and on each pass the mode
mixes its present state against an echo of its earlier state. After $D$ round-trips,
a single physical mode has performed $D$ distinct nonlinear mixings of its own
history. A linear reservoir would need $D$ separate modes to hold $D$ independent
channels of cross-time structure; the feedback loop manufactures them from one mode
reused $D$ times. \emph{Feedback turns time into space.} This is the source of the
resource separation we prove, and of its unbounded character: looping more times
adds processing channels without adding hardware, so for any number of modes a
linear reservoir might have, a Kerr loop with enough depth surpasses it.

\paragraph{Why loss is the friend, not the enemy.}
The mechanism would fail in one circumstance: if every round-trip were identical, the
$D$ passes would imprint the \emph{same} nonlinear phase and collapse to a single
redundant channel. What makes the passes distinct is that the circulating light dims
a little each round-trip, so the intensity-dependent Kerr phase it accumulates is
slightly different on every pass. Loss---ordinarily the adversary of quantum
coherence---is here precisely what gives each echo its own fingerprint and makes the
$D$ channels independent. The size of the advantage is therefore governed by a
single physical ratio, the per-photon Kerr phase relative to the loss per
round-trip, which sets both how strongly the effect activates and how many distinct
round-trips survive.

\paragraph{The catch: a faint imprint costs measurement time.}
The cross-time structure a Kerr loop creates is genuinely present in the output
light, but a single mode carrying many channels imprints each one faintly. Reading
the channels apart takes many measurement repetitions. The architecture therefore
does not deliver a free lunch; it makes a \emph{trade}. It saves the resources that
are scarce in an integrated photonic chip---physical modes, detector chains, chip
area---and spends a resource that is comparatively cheap there---measurement time.
Where mode count is the binding constraint, as it is in integrated photonics, this
is a favorable exchange. We quantify both sides of it.

\paragraph{Relation to prior work.}
Quantum reservoir computing was introduced by Fujii and Nakajima~\cite{fujii2017}
using the density-matrix space of a disordered qubit ensemble, and was developed
in many directions: the reservoir-processing framework of Ghosh
et~al.~\cite{ghosh2019}, the role of thermalization and dynamical phase
transitions~\cite{martinezpena2021}, weak- and projective-measurement readout
strategies~\cite{mujal2023}, and reservoir approaches to quantum state
measurement~\cite{angelatos2021}. Our work sits in the continuous-variable branch,
whose foundational result---that a purely Gaussian CV reservoir is already a
universal fading-memory approximator under suitable input
encoding~\cite{nokkala2021}---is precisely the premise that motivates our question:
since universality is settled, what is gained is a matter of resources, not of
reachability in the limit.

Closest to the present setting is the line of work placing a Kerr nonlinearity
\emph{in the reservoir itself}. Govia et~al.~\cite{govia2021} use a single Kerr
oscillator as the reservoir; coupled-Kerr-oscillator reservoirs have been studied
for the role of entanglement and dissipation~\cite{karimi2025}. These works locate
the nonlinearity in the reservoir's own Hamiltonian, typically Markovian and often
in a semiclassical, many-photon regime. Our architecture differs in a specific and
consequential way: the reservoir is held \emph{Gaussian}, and a single Kerr element
is placed in a \emph{time-delayed, non-Markovian feedback arm}. This makes the
placement of the nonlinearity an explicit design variable and isolates its
computational role as the sole difference from a provably weaker Gaussian baseline,
rather than entangling it with the reservoir's intrinsic dynamics. It is this
placement that yields the unbounded separation we prove; the comparison is
feedback-on versus feedback-off within one architecture.

The present work also builds directly on a line of reservoir-computing studies from
our group in which feedback is the central design element. State feedback was shown
to enhance classical echo-state networks~\cite{ehlers2025statefeedback}, and a
general theory of stochastic reservoir computers was developed
in~\cite{ehlers2025stochastic}; on the quantum side, we established practical
few-atom quantum reservoir computing~\cite{zhu2025fewatom} and, most directly
relevant here, showed that coherent feedback enhances minimalistic and scalable
quantum reservoir computing~\cite{zhu2025feedback}. All-optical echo-state
realizations~\cite{kaushik2026alloptical} and all-optical nonlinearities from
atom--cavity interactions~\cite{zhu2025qonn} supply the photonic-platform context.
The present paper sharpens this program into an exact statement: it identifies
\emph{where} the feedback nonlinearity must be placed to gain expressive power that
no linear reservoir of any size possesses, and proves the resulting separation is
unbounded.

We also distinguish our claim from the well-established classical \emph{time-delay
reservoir computing} paradigm, in which a single nonlinear node with delayed
feedback time-multiplexes ``virtual nodes'' to emulate a many-node
reservoir~\cite{appeltant2011,larger2012}. That body of work is an \emph{equivalence
and efficiency} result---one physical node can realize the function of a spatially
extended reservoir---and is the conceptual ancestor of our ``feedback turns time
into space'' picture. Our contribution is of a different kind: a proven
\emph{separation}. We show not that a feedback reservoir matches a comparable one,
but that a single non-Markovian Kerr mode reaches a class of cross-time nonlinear
computations that \emph{no} finite-mode Gaussian reservoir reaches at fixed readout
degree---an expressivity statement, quantified in the field-standard currency of
information-processing capacity~\cite{dambre2012} and reachable-kernel rank, rather
than a benchmarking equivalence.

We make the picture above precise. After fixing the model and its standing
hypotheses (Section~\ref{sec:setup}), we adopt the cumulant tower of the quadratures
as the reservoir state and expand the input--output map as a Volterra series
(Sections~\ref{sec:gradings}--\ref{sec:lemmas}); in this language the linear
reservoir's limitation is the statement that its connected cumulants above order two
vanish, and the Kerr feedback's power is that it does not. We prove the resulting
structural separation of reachable function classes (Section~\ref{sec:theorem}),
give a constructive account of what the enriched class can compute
(Section~\ref{sec:universality}), and specify an exact open-system model of the
device (Section~\ref{sec:model}) which we simulate to confirm the mechanism
(Section~\ref{sec:empirical}) and to locate the favorable hardware regime
(Section~\ref{sec:platform}). The central result (Section~\ref{sec:scaling}) is the
unbounded resource separation: feedback depth, not mode count, sets the reachable
cross-time rank, so one nonlinear mode outreaches any fixed linear reservoir. We
close by quantifying the physical caps on this benefit and the measurement price of
realizing it (Section~\ref{sec:ledger}), and by stating plainly where the proven
separation ends and operational reach begins (Section~\ref{sec:operational}).

\section{Setup and standing hypotheses}
\label{sec:setup}

\begin{figure}[H]
\centering
\includegraphics[width=0.92\linewidth]{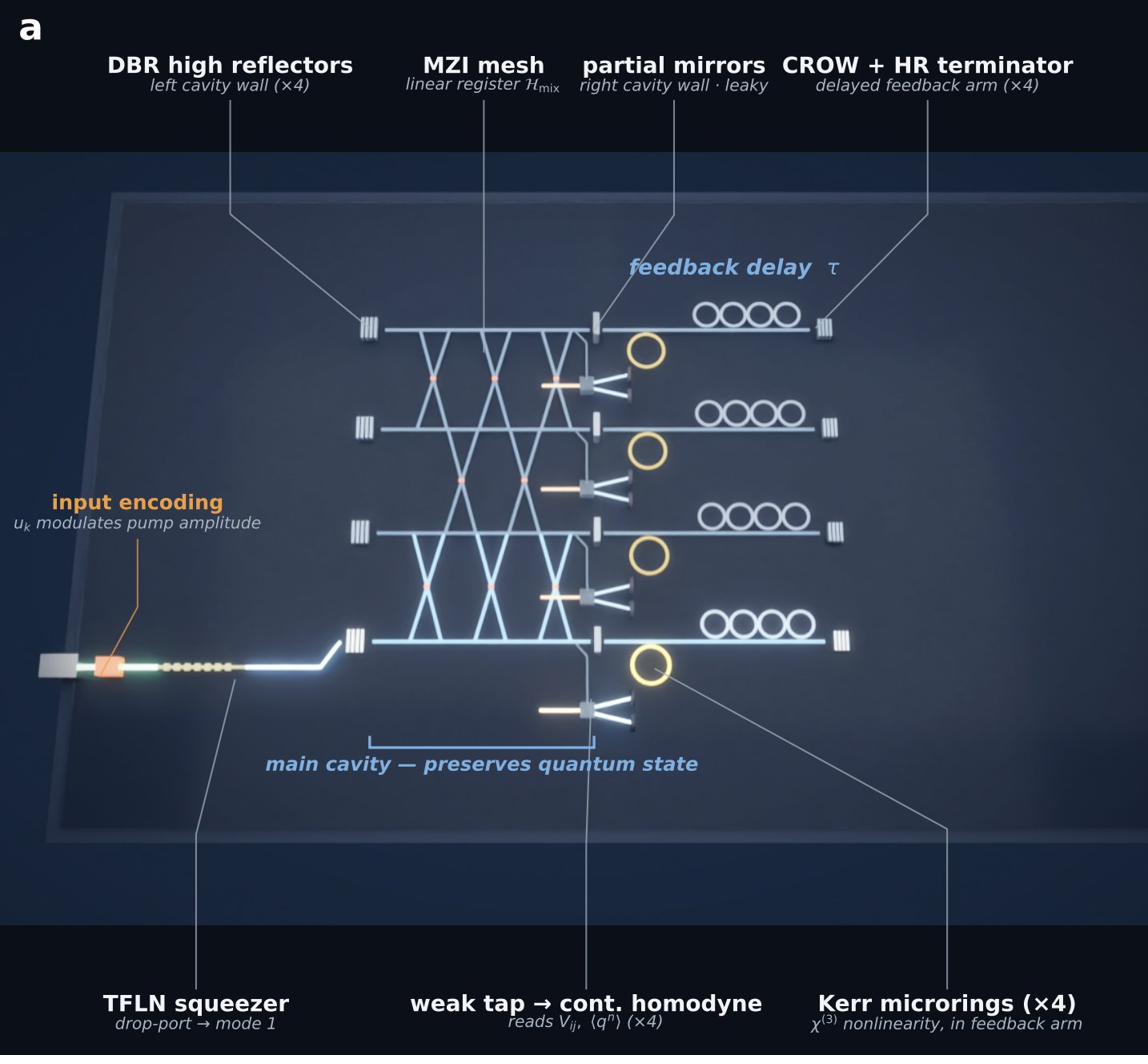}
\caption{Integrated-photonic realization of the non-Markovian Kerr-feedback
continuous-variable QRC. The classical input $u_k$ modulates the pump amplitude
of a thin-film lithium niobate (TFLN) parametric squeezer, injecting an
exponentially nonlinear single-mode Gaussian ancilla through the drop port into
mode~1 (Hypothesis~\ref{H1}). The fixed linear Gaussian register is realized by an
MZI mesh ($\mathcal{H}_{\mathrm{mix}}$) between DBR high reflectors forming the
main cavity, which preserves the quantum state (contraction set by intrinsic
loss, Hypothesis~\ref{H2}). A leaky right cavity wall (partial mirrors) couples
each mode both to a weak homodyne tap---continuously reading the quadrature
moments $\langle q^n\rangle$ that form the degree-$d$ readout
features~\eqref{eq:readout}---and to the delayed coherent feedback arm: a
$\chi^{(3)}$ Kerr microring followed by a CROW delay line of round-trip delay
$\tau$ terminated by a high reflector. The Kerr nonlinearity in the feedback arm,
through coupling strength $\chi$, is the single knob distinguishing the Gaussian
baseline ($\chi=0$) from the NM-Kerr reservoir.}
\label{fig:system}
\end{figure}

\paragraph{Reservoir and quadratures.}
The reservoir consists of $N$ bosonic modes with annihilation operators
$\ahat_1,\dots,\ahat_N$ and canonical quadratures collected in the vector
\begin{equation}
\rhat = (q_1,p_1,\dots,q_N,p_N)^\top,\qquad
q_j = \tfrac{1}{\sqrt2}(\ahat_j+\ahat_j^\dagger),\quad
p_j = \tfrac{-i}{\sqrt2}(\ahat_j-\ahat_j^\dagger),
\end{equation}
with $[\rhat_\mu,\rhat_\nu]=i\Omega_{\mu\nu}$ and $\Omega$ the standard symplectic
form ($\hbar=1$).

\paragraph{State variable: the cumulant tower.}
Let $\Sym^n(\mathbb{R}^{2N})$ denote the symmetric $n$-th tensor power. The state
variable is the tower of Weyl-ordered (symmetric) cumulants of the quadratures,
\begin{equation}
\kappa = \big(\kappa^{(1)},\kappa^{(2)},\kappa^{(3)},\dots\big),
\qquad \kappa^{(n)}\in\Sym^n(\mathbb{R}^{2N}),
\label{eq:tower}
\end{equation}
where $\kappa^{(1)}=\langle\rhat\rangle$ is the mean vector, $\kappa^{(2)}$ the
covariance matrix, and $\kappa^{(n\ge3)}$ the connected $n$-point functions. The
moment tower and the cumulant tower are related by a fixed, polynomial,
grading-preserving bijection (the moment--cumulant formula), so \eqref{eq:tower}
is an equivalent encoding of all quadrature moments. Throughout, the integer
$n=|\alpha|$ grading $\kappa^{(n)}$ is the \emph{operator order} (degree of the
underlying $q,p$ polynomial); it is the quantity limited by the homodyne readout
degree $d$.

\paragraph{Input and sequential drive.}
A scalar input sequence $\{u_k\}_{k\in\mathbb{Z}}\subset\mathbb{K}$ (a compact
input set) is fed one symbol per step of duration $T=T_{\mathrm{on}}+T_{\mathrm{off}}$.
At step $k$ the encoder injects, through the drop port into mode~1, a single-mode
Gaussian ancilla with displacement $\beta(u_k)$ and squeezing $(r(u_k),\phi)$,
$r(u)=g\,u$. The injected first- and second-order data are smooth maps
\begin{equation}
\beta:\mathbb{R}\to\mathbb{R}^{2N},\qquad
\gamma:\mathbb{R}\to\Sym^2(\mathbb{R}^{2N}),
\label{eq:inject}
\end{equation}
supported on the mode-1 block. The covariance injection $\gamma(u)$ contains the
squeezing factors $\cosh 2r(u),\ \sinh 2r(u)$ and is therefore an entire
(exponentially nonlinear) function of $u$. We write the Taylor data
$\beta_1:=\beta'(0)$, $\gamma_1:=\gamma'(0)$.

\paragraph{Feedback arm.}
A leaky port couples each mode to a feedback waveguide containing a Kerr element
($H_K=\tfrac{\chi}{2}\sum_j \ahat_j^{\dagger2}\ahat_j^2$, so
$\dot{\ahat}_j=-i\chi\,\ahat_j^\dagger \ahat_j\ahat_j+\cdots$) and a delay line of
round-trip delay $\tau$, terminated by a high reflector. We write
$\ell_\tau:=\tau/T\in\mathbb{N}$ for the delay in input steps. The parameter
$\chi$ is the single knob that distinguishes the two reservoirs compared below.

We make the following standing hypotheses.

\begin{hypothesis}[Injected encoding]\label{H1}
The input enters only through the injected ancilla data \eqref{eq:inject}.
Consequently the per-step reservoir propagation (mirror coupling, the MZI linear
network, intrinsic loss, and the ancilla reset) is a fixed, input-independent
Gaussian channel $\Cz$.
\end{hypothesis}

\begin{hypothesis}[Echo state / contraction]\label{H2}
The single-step covariance propagator $S_1$ induced by $\Cz$ on
$\mathbb{R}^{2N}$ has spectral radius $\rho(S_1)<1$. Equivalently the
continuous-time drift $A$ is Hurwitz, $\operatorname{Re}\lambda<0$ for every
eigenvalue $\lambda$ of $A$.
\end{hypothesis}

\begin{hypothesis}[Displacement seed]\label{H3}
The seed makes the mean input-carrying at first order: $\beta_1\neq0$.
\end{hypothesis}

\begin{hypothesis}[Active squeezing]\label{H4}
The covariance injection is nonconstant, $\gamma_1\neq0$, and the encoding is
non-degenerate on the measured channel.
\end{hypothesis}

Hypotheses \ref{H2}--\ref{H4} are exactly the conditions under which the Gaussian
baseline of Nokkala et al.~\cite{nokkala2021} is a universal fading-memory
approximator; we take that baseline as given and ask what the Kerr feedback adds.
Hypothesis~\ref{H1} is the modelling fact that keeps $\Cz$ input-independent; its
necessity is discussed in Remark~\ref{rem:H1}.

\paragraph{Readout and reachable class.}
The homodyne layer estimates the Weyl-ordered moments of $\rhat$ up to degree $d$.
The trained output is a linear functional of these features,
\begin{equation}
\hat y_k = W\cdot x_k,\qquad
x_k = \big(\text{all symmetric moments }\langle \mathcal{S}(\rhat^\alpha)\rangle_k,\ |\alpha|\le d\big),
\label{eq:readout}
\end{equation}
with $W$ optimized in $L^2(\mathbb{K}^{\mathbb{Z}},\mu)$ for an input measure
$\mu$. Equivalently, by the moment--cumulant bijection, the reachable function
class is the linear span of cumulant monomials of total operator weight $\le d$:
\begin{equation}
\Fclass{\chi}{N,d} := \overline{\operatorname{span}}
\Big\{\textstyle\prod_{n\ge1}\big(\kappa^{(n)}_k\big)^{\otimes q_n}:
\textstyle\sum_n n\,q_n\le d\Big\},
\label{eq:Fclass}
\end{equation}
the closure taken in $L^2(\mu)$. We write $\Fclass{0}{N,d}$ for the Gaussian
reservoir ($\chi=0$) and $\Fclass{\Kerr}{N,d}$ for $\chi\neq0$.

\section{Gradings, kernels, and the discriminating invariant}
\label{sec:gradings}

\paragraph{Action of the generator on the grading.}
Two facts about how the per-step dynamics act on the cumulant grading drive
everything.

\emph{(i) The Gaussian part is order-graded with sources only at orders $1,2$.}
A Gaussian channel is a symplectic map $\rhat\mapsto S\rhat$ composed with
displacement and Gaussian-noise convolution. The symplectic part transforms the
order-$n$ cumulant multilinearly, by the induced map
$S_n:=S^{\otimes n}\big|_{\Sym^n}$; displacement shifts only $\kappa^{(1)}$;
Gaussian noise adds only to $\kappa^{(2)}$. Hence under $\Cz$ together with the
injection \eqref{eq:inject},
\begin{equation}
\kappa^{(1)}_k = S_1\kappa^{(1)}_{k-1}+\beta(u_{k-1}),\quad
\kappa^{(2)}_k = S_2\kappa^{(2)}_{k-1}+\gamma(u_{k-1}),\quad
\kappa^{(n)}_k = S_n\kappa^{(n)}_{k-1}\ (n\ge3),
\label{eq:gauss-rec}
\end{equation}
where the order-$n$ blocks with $n\ge3$ are homogeneous (no input source) and
$\rho(S_n)\le\rho(S_1)^n<1$ by \ref{H2}.

\emph{(ii) Kerr raises operator order by two.} The quartic generator obeys
$\dot{\ahat}_j=-i\chi\,\ahat_j^\dagger\ahat_j\ahat_j+\cdots$, raising operator
degree by $4-2=2$. In cumulant coordinates the feedback therefore injects, once
per delay $\ell_\tau$, a source into every order, assembled from cumulants two
orders higher and evaluated at the delayed time:
\begin{equation}
\dot\kappa^{(n)}(t) = A_n\kappa^{(n)}(t)
+ \chi\,V_n\!\big[\kappa^{(\le n+2)}(t-\tau)\big] + O(\chi^2),
\label{eq:kerr-rec}
\end{equation}
where $A_n$ is the order-$n$ lift of the Hurwitz drift $A$ (eigenvalues are sums
of $n$ eigenvalues of $A$, all with negative real part) and $V_n$ is the fixed
multilinear Kerr vertex catalogued in Appendix~\ref{app:vertex}.

\paragraph{Connected Volterra kernels.}
Under \ref{H2} the steady-state response admits a convergent Volterra expansion
\cite{boyd1985,cuchiero2021}. Expand each cumulant in the input history,
\begin{equation}
\kappa^{(n)}_k = \sum_{p\ge0}\sum_{m_1,\dots,m_p}
K^{(n)}_p(m_1,\dots,m_p)\prod_{i=1}^p u_{k-m_i},
\label{eq:volterra}
\end{equation}
with $K^{(n)}_p$ the connected Volterra kernels (symmetric, $\Sym^n$-valued). A
kernel is \emph{diagonal} if it is supported on $m_1=\cdots=m_p$ and
\emph{cross-time} otherwise.

\begin{definition}[Connected order]\label{def:ordc}
For $F\in\Fclass{\chi}{N,d}$ written in the cumulant generators \eqref{eq:Fclass},
let $\ordc(F)$ be the largest cumulant order $n$ that appears as a factor in some
monomial of $F$. We call $F$ \emph{Gaussian-type} if $\ordc(F)\le 2$ and all its
order-1, 2 connected kernels are diagonal.
\end{definition}

The pair (connected order; cross-time support of the low-order connected kernels)
is the invariant that separates the two reservoirs.

\section{Two lemmas}
\label{sec:lemmas}

\begin{lemma}[Gaussian closure]\label{lem:gauss}
Assume \ref{H1}--\ref{H2} and $\chi=0$. Then for every $n\ge3$ and every $p$,
\[
K^{(n)}_p\equiv0,
\]
i.e.\ the entire connected tower above order two vanishes identically; the state
is globally Gaussian. Moreover the surviving kernels are diagonal: for all
$p\ge1$,
\[
K^{(1)}_p(m_1,\dots,m_p)=\delta_{m_1\cdots m_p}\,S_1^{m_1-1}\beta^{(p)}_{(1)},
\qquad
K^{(2)}_p(m_1,\dots,m_p)=\delta_{m_1\cdots m_p}\,S_2^{m_1-1}\gamma^{(p)}_{(1)},
\]
where $\delta_{m_1\cdots m_p}=1$ iff $m_1=\cdots=m_p$ and
$\beta^{(p)}_{(1)},\gamma^{(p)}_{(1)}$ are the $p$-th Taylor coefficients of
$\beta,\gamma$ in the mode-1 block. Consequently every element of
$\Fclass{0}{N,d}$ is Gaussian-type in the sense of
Definition~\ref{def:ordc}: its only cross-time content is the disconnected (Wick)
product of diagonal order-1, 2 kernels.
\end{lemma}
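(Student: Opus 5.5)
The plan is to work directly from the graded recursion \eqref{eq:gauss-rec}, which at $\chi=0$ and under \ref{H1} is the complete per-step dynamics of the cumulant tower. The key structural point is that the recursion is \emph{block-diagonal in $n$}. The order-$n$ update involves only $\kappa^{(n)}_{k-1}$ and, for $n\le 2$, an additive input source. Nothing at order $n$ is fed by cumulants of a different order. This is because a Gaussian channel acts affinely on $\rhat$, so connected functions transform tensorially under $S^{\otimes n}$. Displacement and Gaussian noise touch only orders one and two, and the injected ancilla is Gaussian, so its own connected data above order two vanish.

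I would carry out three steps, in order.

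\textbf{Step 1: the higher tower vanishes.} For $n\ge3$, iterating the homogeneous recursion gives $\kappa^{(n)}_k=S_n^{M}\kappa^{(n)}_{k-M}$ for every $M$. By \ref{H2}, $\rho(S_n)\le\rho(S_1)^n<1$, and the tower stays bounded along trajectories because the input set $\mathbb{K}$ is compact. Letting $M\to\infty$ forces the steady-state (echo-state) solution to satisfy $\kappa^{(n)}_k=0$. Equivalently, every kernel $K^{(n)}_p$ in \eqref{eq:volterra} vanishes, since the steady state has no dependence on initial data.

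\textbf{Step 2: the order-1 and order-2 kernels are diagonal.} I would unroll the order-one recursion into the convergent series
\[
\kappa^{(1)}_k=\sum_{m\ge1}S_1^{m-1}\beta(u_{k-m}),
\]
and do the same for $\kappa^{(2)}_k$ with $S_2$ and $\gamma$. Each term depends on a \emph{single} past input. Taylor-expanding $\beta$ and $\gamma$, which are entire for $\gamma$ and smooth for $\beta$ on the compact set $\mathbb{K}$, gives $u_{k-m}^p$ multiplied by the $p$-th coefficient. Matching against \eqref{eq:volterra} by uniqueness of symmetric Volterra kernels then yields exactly the stated $\delta_{m_1\cdots m_p}$ form.

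\textbf{Step 3: Gaussian-type consequence.} Substituting Steps 1 and 2 into \eqref{eq:Fclass}, only monomials in $\kappa^{(1)}$ and $\kappa^{(2)}$ survive, so $\ordc\le2$ and every connected factor is diagonal. Any cross-time dependence therefore arises only from products of distinct factors, that is, from disconnected Wick products. This is precisely Definition~\ref{def:ordc}.

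The main obstacle is justifying the interchange of the Taylor expansion with the infinite sum over $m$, and the uniqueness of kernels, so that the matching in Step 2 is legitimate. I would handle this with a uniform bound $\|S_1^{m-1}\|\le C\rho'^{\,m}$ for some $\rho(S_1)<\rho'<1$, which gives absolute convergence of the double series on $\mathbb{K}$. The Volterra expansion's convergence itself is then taken from the standing result cited with \eqref{eq:volterra}.

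A secondary subtlety is confirming that the ancilla reset inside $\Cz$ is Gaussian and does not mix orders. This follows from \ref{H1}, since $\Cz$ is a Gaussian channel and Gaussian channels preserve Gaussianity of the order grading.
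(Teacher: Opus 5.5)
Your proposal follows the paper's proof essentially step for step: iterate the homogeneous order-$n\ge3$ recursion with $\rho(S_n)\le\rho(S_1)^n<1$ to kill the higher tower, unroll the order-1 and order-2 recursions into sums of single-time terms and Taylor-expand to read off the diagonal kernels, then read the Gaussian-type conclusion off \eqref{eq:Fclass}. Your added care is an improvement on the paper's $\|S_n^P\kappa\|\le\rho(S_1)^{nP}\sup\|\kappa\|$ estimate, which is not valid as a norm bound for non-normal $S_n$: you use a $C\rho'^{\,m}$ bound and invoke kernel uniqueness explicitly. However, absolute convergence of your double series over $m$ and $p$ needs $\beta$ to be analytic on $\mathbb{K}$, not merely smooth; the paper makes this same assumption implicitly.
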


\begin{proof}
For $n\ge3$ the recursion \eqref{eq:gauss-rec} is homogeneous,
$\kappa^{(n)}_k=S_n\kappa^{(n)}_{k-1}$. Iterating $P$ steps from the steady state
and using $\rho(S_n)\le\rho(S_1)^n<1$ gives
$\|\kappa^{(n)}_k\|=\|S_n^P\kappa^{(n)}_{k-P}\|
\le\rho(S_1)^{nP}\sup_j\|\kappa^{(n)}_j\|\to0$ as $P\to\infty$; hence
$\kappa^{(n)}_k\equiv0$ for all inputs, so all its Volterra kernels vanish. For
$n=1$, unrolling the scalar-source recursion \eqref{eq:gauss-rec} gives
$\kappa^{(1)}_k=\sum_{m\ge1}S_1^{m-1}\beta(u_{k-m})$, a sum over single past
indices; expanding $\beta(u)=\sum_p \beta^{(p)}_{(1)}u^p/p!$ shows the order-$p$
kernel is supported on a single $m=m_1=\cdots=m_p$, i.e.\ diagonal. The same
argument applies verbatim to $\kappa^{(2)}$ with $\gamma$. Finally, since by
\eqref{eq:Fclass} every generator of $\Fclass{0}{N,d}$ is a product of these
diagonal order-1, 2 kernels, any cross-time ($m_i\neq m_j$) term arises only as a
product of two distinct diagonal factors, which is disconnected; thus
$\ordc\le2$ with diagonal connected kernels, i.e.\ Gaussian-type.
\end{proof}

\begin{remark}\label{rem:encoding}
Lemma~\ref{lem:gauss} holds for arbitrary encoding maps $\beta,\gamma$, in
particular for the exponential squeezing encoding $\gamma\sim e^{2r(u)}$. The
encoding can make each diagonal kernel $K^{(n)}_p$ nonzero to all Volterra orders
$p$ (the diagonal is filled with every power $u^p_{k-m}$), but it cannot move
support off the diagonal nor populate any connected order $\ge3$. This is the
precise sense in which the ``strongest nonlinearity'' of the pump encoding acts on
an axis orthogonal to cross-time computation.
\end{remark}

\begin{lemma}[Kerr activation]\label{lem:kerr}
Assume \ref{H1}--\ref{H4} and $\chi\neq0$. Then:
\begin{enumerate}[label=(\alph*)]
\item the first moment acquires a connected cross-time bilinear kernel,
\[
K^{(1)}_2(m,m') = \chi\sum_{\ell\ge0}\big[e^{A_1\delta}\big]_\ell\,
\Phi_{m-\ell-\ell_\tau}\,\Psi_{m'-\ell-\ell_\tau} + (m\leftrightarrow m')\neq0
\quad (m\neq m'),
\]
where $\Phi_m:=S_1^{m-1}\beta_1$ is the mean response basis and
$\Psi_{m'}:=S_2^{m'-1}\gamma_1$ the covariance response basis;
\item the connected third cumulant is sourced,
\[
\kappa^{(3)}(t) = \chi\int_{-\infty}^t e^{A_3(t-s)}
V_3\!\big[\kappa^{(2)},\kappa^{(1)}\big](s-\tau)\,ds + O(\chi^2)\not\equiv0,
\]
so $K^{(3)}_p\not\equiv0$ for some $p$.
\end{enumerate}
\end{lemma}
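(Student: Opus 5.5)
The argument is a first-order perturbation expansion in $\chi$ around the Gaussian steady state of Lemma~\ref{lem:gauss}. I would write every cumulant as $\kappa^{(n)}=\kappa^{(n)}_{0}+\chi\,\kappa^{(n)}_{1}+O(\chi^2)$, where $\kappa_0$ is the $\chi=0$ solution. By Lemma~\ref{lem:gauss}, $\kappa^{(1)}_0$ and $\kappa^{(2)}_0$ are the diagonal sums $\sum_m\Phi_m u_{k-m}+\cdots$ and $\sum_{m'}\Psi_{m'}u_{k-m'}+\cdots$, and $\kappa^{(n\ge3)}_0\equiv0$.

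I would substitute this into the delayed-source equation \eqref{eq:kerr-rec} and solve the resulting linear inhomogeneous equation by variation of constants. Hypothesis~\ref{H2} makes the homogeneous part Hurwitz, so the solution is the convergent convolution $\kappa_1^{(n)}(t)=\int_{-\infty}^t e^{A_n(t-s)}V_n[\kappa_0^{(\le n+2)}(s-\tau)]\,ds$. Sampling at the step grid with $\delta$ the sub-step offset turns this into the discrete sum over $\ell\ge0$ weighted by $[e^{A_1\delta}]_\ell$. The delay $\tau$ appears as the shift $\ell_\tau$.

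\textbf{Part (a).} I would take $n=1$. Because Kerr raises order by two, $V_1$ acts on $\kappa^{(\le3)}$, and with $\kappa^{(3)}_0=0$ it is built from the cubic combination $a^\dagger a a$. In Weyl cumulants this expands to a sum of a term $\kappa^{(1)}\otimes\kappa^{(2)}$, a term $\kappa^{(1)\otimes3}$, and a linear term (the Appendix~\ref{app:vertex} catalogue). Only the mixed term $\kappa^{(1)}\kappa^{(2)}$ has a bilinear part with one factor from the mean and one from the covariance.

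Its $u_{k-m}u_{k-m'}$ coefficient is $\Phi_{m-\ell-\ell_\tau}\Psi_{m'-\ell-\ell_\tau}$, symmetrized in $(m,m')$, which is exactly the displayed kernel. It is connected because it comes from one vertex acting on a single past time $s-\tau$. It is cross-time because $\Phi$ and $\Psi$ carry independent lags.

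The cubic-mean term does not contribute to $K^{(1)}_2$ at all, since it is degree $\ge3$ in $u$. The constant steady-state parts $\kappa^{(1)}_0(0)$ and $\kappa^{(2)}_0(0)$ feed only diagonal or linear terms.

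\textbf{Part (b).} I would take $n=3$. Since $\kappa^{(3)}_0=0$, the $O(\chi)$ term is the displayed integral. Its source $V_3[\kappa^{(2)},\kappa^{(1)}]$ is the product part of the order-5 input to $V_3$: with $\kappa^{(\ge3)}_0=0$, the Wick expansion of the 5-point moment leaves $\kappa^{(2)}\kappa^{(2)}\kappa^{(1)}$-type and $\kappa^{(2)}\kappa^{(1)}$-type contributions.

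To show nonvanishing, I would exhibit a single Volterra coefficient, the $u_{k-m}u_{k-m'}$ term built from $\beta_1$ and $\gamma_1$, and argue that it is generically nonzero.

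\textbf{Main obstacle.} The hard step in both parts is genuine nonvanishing, as opposed to an identity that holds generically. I would proceed as follows.

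First, compute the $V_1$ and $V_3$ contractions explicitly on the mode-1 block. For instance, the mean equation picks up $-i\chi(2|\alpha|^2\alpha+2n\alpha+m\alpha^*)$ with $n=\langle\delta a^\dagger\delta a\rangle$ and $m=\langle\delta a^2\rangle$. The bilinear coefficient is then a fixed nonzero linear map $L$ applied to $\beta_1\otimes\gamma_1$.

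Second, show that the lag sum cannot cancel for all $m\neq m'$. Viewed as a function of $(m,m')$, the sum is a finite combination of products of $S_1$-modes and $S_2$-modes. Its $z$-transform is rational with poles at distinct products of eigenvalues, so it vanishes identically only if $L(\Phi_j\otimes\Psi_{j'})=0$ for all $j,j'$.

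Third, rule that out. Hypotheses~\ref{H3}--\ref{H4} give $\Phi_1=\beta_1\neq0$ and $\Psi_1=\gamma_1\neq0$, and the non-degeneracy on the measured channel is what supplies the remaining condition $L(\beta_1\otimes\gamma_1)\neq0$. If that condition fails for a special phase alignment, I would fall back on showing that the failure set is a proper algebraic subvariety of the encoding parameters, so nonvanishing holds generically.
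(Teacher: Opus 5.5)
Your proposal is essentially the paper's proof: a first-order expansion in $\chi$ about the Gaussian background of Lemma~\ref{lem:gauss}. In (a), the mean-times-covariance part ($\bar a^*s$, $2\bar a n$) of the cubic loop source $\langle\ahat^\dagger\ahat\ahat\rangle_{t-\tau}$ yields the two-basis cross-time kernel; in (b), the $V_3$ vertex evaluated on that background (a mean-times-variance component) sources $\kappa^{(3)}$. Your nonvanishing discussion is more careful than the paper's, which simply infers a nonzero coefficient from $\Phi_m\neq0$, $\Psi_{m'}\neq0$ via Hypotheses~\ref{H3}--\ref{H4}. One caveat: dropping the cubic-mean term $|\bar a|^2\bar a$ from $K^{(1)}_2$ requires a vanishing zero-input mean, $\beta(0)=0$. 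Otherwise $\bar a_0^*(\delta\bar a)^2+2\bar a_0|\delta\bar a|^2$ adds cross-time $\Phi\otimes\Phi$ terms, contrary to your claim that constant background parts feed only diagonal or linear terms; the paper makes the same assumption tacitly by expanding $\bar a$ from first Volterra order.
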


\begin{proof}
The Heisenberg feedback equation for the mean carries the cubic loop source
$-i\chi\langle \ahat_j^\dagger\ahat_j\ahat_j\rangle_{t-\tau}$ in
$\dot{\bar a}_j$. Expanding this third moment in cumulants and dropping the
$O(\chi)$ contribution of $\kappa^{(3)}$ inside an already $O(\chi)$ term,
\begin{equation}
\langle \ahat^\dagger\ahat\ahat\rangle_{t-\tau}
= \underbrace{\bar a^*_{t-\tau}s_{t-\tau}+2\bar a_{t-\tau}n_{t-\tau}}_{\text{bilinear, cross-time}}
+\,|\bar a_{t-\tau}|^2\bar a_{t-\tau}+O(\chi),
\label{eq:loopsource}
\end{equation}
with $s=\langle\!\langle \ahat\ahat\rangle\!\rangle$ and
$n=\langle\!\langle\ahat^\dagger\ahat\rangle\!\rangle$ entries of $\kappa^{(2)}$.
By Lemma~\ref{lem:gauss} (applied to the $\chi=0$ background),
$\bar a_{t-\tau}=\sum_m\Phi_m u_{t-\tau-m}+\cdots$ and
$s_{t-\tau},n_{t-\tau}=\sum_{m'}\Psi_{m'}u_{t-\tau-m'}+\cdots$ at first Volterra
order. The product term $\bar a^*s$ therefore contributes, after propagation by
$e^{A_1\delta}$ over the remaining $\delta$ of the step, the stated
$K^{(1)}_2(m,m')$ with genuine $m\neq m'$ support. By \ref{H3}
($\beta_1\neq0\Rightarrow\Phi_m\neq0$) and \ref{H4}
($\gamma_1\neq0\Rightarrow\Psi_{m'}\neq0$) the coefficient is nonzero, proving
(a). For (b), the order-3 recursion \eqref{eq:kerr-rec} has source
$\chi V_3[\kappa^{(2)},\kappa^{(1)}]$; Appendix~\ref{app:vertex} evaluates one
component of $V_3$ on the Gaussian background and finds it equal to a nonzero
multiple of $\bar q\,(\Var q)$ under \ref{H3}--\ref{H4}, so the source does not
vanish and $\kappa^{(3)}\not\equiv0$.
\end{proof}

\section{Main separation theorem}
\label{sec:theorem}

\begin{theorem}[Computational superiority of NM-Kerr feedback]\label{thm:main}
Assume \ref{H1}--\ref{H4}. Then for every reservoir size $N$, every readout
degree $d\ge1$, and every admissible encoding:
\begin{enumerate}[label=(1\alph*)]
\item \textbf{(Strict inclusion.)} $\Fclass{0}{N,d}\subsetneq\Fclass{\Kerr}{N,d}$.
In particular the connected cross-time bilinear $K^{(1)}_2$ of
Lemma~\ref{lem:kerr}(a) is a degree-1 feature lying in $\Fclass{\Kerr}{N,1}$ but
not in $\Fclass{0}{N,1}$, and the connected third cumulant of
Lemma~\ref{lem:kerr}(b) is a degree-3 feature lying in $\Fclass{\Kerr}{N,3}$ with
$\ordc=3$, outside the Gaussian-type class.
\item \textbf{(Kernel-rank separation at matched connected order.)} Every
Gaussian-reachable bilinear kernel, viewed as an $M\times M$ matrix on memory
depth $M$, has the separable form
\[
\big[K^{(0)}_2\big]_{mm'}=\psi_m^\top W'\psi_{m'},\qquad
\psi_m=S_1^{m-1}\beta_1\in\mathbb{R}^{2N},
\]
hence $\operatorname{rank}K^{(0)}_2\le2N$ and is drawn from the single fixed
response basis $\{S_1^{m-1}\beta_1\}_m$, reused identically in every time slot. The
NM-Kerr reservoir realizes bilinear kernels of the convolutional, two-basis form
of Lemma~\ref{lem:kerr}(a),
\[
\big[K^{\Kerr}_2\big]_{mm'}=\chi\sum_\ell\big[e^{A_1\delta}\big]_\ell
\Phi_{m-\ell-\ell_\tau}\Psi^\top_{m'-\ell-\ell_\tau}+(m\leftrightarrow m'),
\]
which is non-separable and, whenever the memory depth exceeds the mode count
($M>2N$), attains $\operatorname{rank}K^{\Kerr}_2>2N$. Hence at matched connected
order the Kerr-reachable bilinear kernels strictly contain the Gaussian-reachable
ones.
\end{enumerate}
\end{theorem}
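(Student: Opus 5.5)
The proof will be assembled almost entirely from Lemma~\ref{lem:gauss} and Lemma~\ref{lem:kerr}, with one genuinely new ingredient: a rank lower bound for the convolutional Kerr kernel. For (1a), I first establish the inclusion $\Fclass{0}{N,d}\subseteq\Fclass{\Kerr}{N,d}$. The generators in \eqref{eq:Fclass} are the same cumulant monomials for both reservoirs, and the $\chi$-dependent kernels depend continuously (indeed analytically, through the perturbative expansion \eqref{eq:kerr-rec}) on $\chi$. The Gaussian features can therefore be recovered inside the closed span. I would do this either by regarding $\chi$ as switchable, the comparison being feedback-on versus feedback-off within one architecture, or by isolating the $O(\chi^0)$ part through a limiting argument in the $L^2(\mu)$ closure. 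I expect to need to fix this interpretation explicitly.

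Strictness then follows from Lemma~\ref{lem:kerr}(a). The degree-1 feature $\kappa^{(1)}_k$ of the Kerr reservoir has a nonzero connected kernel $K^{(1)}_2(m,m')$ with $m\neq m'$. By Lemma~\ref{lem:gauss}, every element of $\Fclass{0}{N,1}$ is a linear functional of $\kappa^{(1)}$ alone, so its bilinear kernel is diagonal. To make the non-membership rigorous, I would project onto the second-order Wiener/Volterra chaos in $L^2(\mu)$ for an input measure with independent coordinates. In that chaos, diagonal and off-diagonal bilinear kernels are orthogonal, and this orthogonality survives the closure. The $\ordc=3$ statement is immediate from Lemma~\ref{lem:kerr}(b) together with Definition~\ref{def:ordc}, since Gaussian-type elements have $\ordc\le 2$.

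For (1b), the Gaussian half comes from Lemma~\ref{lem:gauss}. Every bilinear cross-time contribution is a product of two order-1 diagonal features, $(w^\top\kappa^{(1)})(w'^\top\kappa^{(1)})$, or more generally a quadratic form in $\kappa^{(1)}$ with matrix $W'$. Its $(m,m')$ coefficient is $\psi_m^\top W'\psi_{m'}$ with $\psi_m=S_1^{m-1}\beta_1\in\mathbb{R}^{2N}$. Hence the $M\times M$ matrix factors as $\Psi^\top W'\Psi$ with $\Psi$ of size $2N\times M$, which gives rank at most $2N$. The order-2 features are diagonal by Lemma~\ref{lem:gauss} and contribute no off-diagonal bilinear term at first order in each argument.

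For the Kerr half, I would write $K^{\Kerr}_2=\chi\sum_\ell c_\ell\,T_\ell\,\Phi\Psi^\top T_\ell^\top+(\text{transpose})$, where $T_\ell$ is the shift by $\ell+\ell_\tau$ and $c_\ell=[e^{A_1\delta}]_\ell$. This is a sum of shifted rank-one (or rank-$2N$) blocks. The plan is to exhibit $M>2N$ linearly independent rows. The route I would try first is to choose the memory window and use the fact that distinct shifts $\ell$ produce outer products of shifted sequences whose leading supports are staggered. This yields a triangular submatrix of size exceeding $2N$ with nonzero diagonal $c_\ell\Phi_1\Psi_1\neq0$, where the nonvanishing comes from \ref{H3} and \ref{H4}.

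The main obstacle is this rank lower bound. The sequences $\Phi_m$ and $\Psi_m$ themselves lie in spans of dimension $2N$ (respectively $\dim\Sym^2$), so cancellations among the shifted terms must be ruled out. I would handle this through a generic-parameter argument: rank is lower semicontinuous and the entries are analytic in $(\chi,\delta,S_1)$, so it suffices to find a single admissible configuration, for example a decay structure with distinct effective loss rates per pass, attaining rank $\min(M,\cdot)>2N$.
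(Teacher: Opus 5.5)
Your skeleton is the paper's: Lemmas~\ref{lem:gauss} and~\ref{lem:kerr} for (1a), and for (1b) the factorization $[K^{(0)}_2]_{mm'}=\psi_m^\top W'\psi_{m'}$ followed by a rank count on the delay-shifted Kerr sum. Two of your steps are tighter than the paper's proof. First, your $d=1$ non-membership argument via additive separability is a genuine $L^2(\mu)$ argument; the theorem's proof only asserts ``not Gaussian-type'', and the separability argument appears later, in Corollary~\ref{cor:witness}. Second, you see that the Kerr rank bound needs \emph{both} shifted factor families to be independent, which staggered supports supply. The paper's ``summands linearly independent as matrices'' does not by itself give rank $>2N$. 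Your worry about the inclusion is also well founded. At fixed parameters $\Fclass{0}{N,1}$ and $\Fclass{\Kerr}{N,1}$ are each spanned by at most $2N{+}1$ functions, so a strict inclusion is impossible once the Gaussian class has its generic dimension. The statement needs the Kerr class read as the closure of a union over $\chi$, which is what ``set $\chi=0$ continuously'' tacitly does.

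The genuine gap is strictness for $d\ge2$. Connectedness and $\ordc$ are attributes of a particular expression in cumulant coordinates, not of an element of $L^2(\mu)$. Your second-chaos argument stops at $d=1$, because $\kappa^{(1)}\otimes\kappa^{(1)}$, already in $\Fclass{0}{N,2}$, has off-diagonal bilinear kernels. So for every $d\ge2$ (the $\ordc=3$ feature included), non-membership in $\Fclass{0}{N,d}$ has to be certified by a function-level invariant. In practice that is the rank in (1b), and your (1b) does not yet provide one, for two reasons.

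\emph{The Gaussian half undercounts for $d\ge3$.} A lone order-2 cumulant is diagonal, but the products $\kappa^{(1)}\kappa^{(2)}$ and $\kappa^{(2)}\kappa^{(2)}$ give off-diagonal terms $\Phi_m\otimes\Psi_{m'}$ and $\Psi_m\otimes\Psi_{m'}$. These are the covariance slots that Theorem~\ref{thm:scaling}(a) admits once $d>\nu$. Off the diagonal, every Gaussian bilinear kernel equals $r_m^\top H r_{m'}$ with $r_m=(\Phi_m,\Psi_m)$. So the rank ceiling is $2N+N(2N{+}1)$, and $2N$ is right only for $d\le2$.

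\emph{Gaussian features also carry same-time terms.} For example $\gamma''(0)\neq0$ from the $\cosh 2gu$ factor, and the resulting diagonal can have rank $M$. So only the off-diagonal part of a Gaussian kernel is rank-limited. Your certificate is an $XCY^\top$ factorization whose pivots $c_\ell\Phi_1\Psi_1$ sit at $m=m'$. It bounds the rank of the full matrix, and therefore cannot rule out $K^{\Kerr}_2=(\text{low rank})+(\text{diagonal})$. The symmetrization $(m\leftrightarrow m')$ also breaks that factorization.

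What is needed is a nonvanishing minor with disjoint row and column index sets, of size exceeding the $M$-independent ceiling. Because $\Phi_m$ and $\Psi_m$ are geometric, every block whose rows all precede its columns is semiseparable with rank bounded independently of $M$. So that minor must interleave rows and columns across the diagonal. The unbounded rank is produced by the causal cutoff $\ell<\min(m,m')-\ell_\tau$; in a scalar toy it yields $\lambda^{\max(m,m')}$-type terms, whose odd-row/even-column minors are nonzero. Establishing such a minor, generically via your analyticity argument, is the missing step.
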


\begin{proof}
\emph{Clause (1a).} The $\chi=0$ statement is exactly Lemma~\ref{lem:gauss}. For
$\chi\neq0$, Lemma~\ref{lem:kerr}(a) produces a nonzero connected $K^{(1)}_2$,
which is a degree-1 feature, hence belongs to $\Fclass{\Kerr}{N,d}$ for every
$d\ge1$ and is not Gaussian-type (its order-1 connected kernel is non-diagonal).
Lemma~\ref{lem:kerr}(b) produces a nonzero connected $\kappa^{(3)}$, a degree-3
feature, hence in $\Fclass{\Kerr}{N,d}$ for $d\ge3$ with $\ordc=3$. Since both lie
outside the Gaussian-type class while $\Fclass{0}{N,d}\subseteq\{$Gaussian-type$\}$
and trivially $\Fclass{0}{N,d}\subseteq\Fclass{\Kerr}{N,d}$ (set $\chi=0$
continuously; the Gaussian generators are recovered as the $\chi\to0$ limit of the
Kerr ones), the inclusion is strict.

\emph{Clause (1b).} At $d\ge2$ the feature vector contains the symmetric products
$\kappa^{(1)}_{k,\mu}\kappa^{(1)}_{k,\nu}$. Their bilinear-in-$u$ part uses
$\beta_1$ in each factor:
$\kappa^{(1)}_{k,\mu}=\sum_m(S_1^{m-1}\beta_1)_\mu u_{k-m}+\cdots
=\sum_m\psi_{m,\mu}u_{k-m}+\cdots$. Collecting the trained quadratic weights into
$W'$ gives $[K^{(0)}_2]_{mm'}=\sum_{\mu\nu}W'_{\mu\nu}\psi_{m,\mu}\psi_{m',\nu}
=\psi_m^\top W'\psi_{m'}$, i.e.\ $K^{(0)}_2=\Psi^\top W'\Psi$ with
$\Psi=[\psi_1,\dots,\psi_M]\in\mathbb{R}^{2N\times M}$. Hence
$\operatorname{rank}K^{(0)}_2\le\operatorname{rank}\Psi\le2N$, and the kernel is a
sum of at most $(2N)^2$ separable terms built from the single family
$\{\psi_m\}$. By contrast the Kerr kernel of Lemma~\ref{lem:kerr}(a) couples the
two distinct families $\{\Phi_m\}$ and $\{\Psi_{m'}\}$ through a delay-shifted
convolution; writing it as $\sum_\ell c_\ell\,\Phi^{(\ell)}\otimes\Psi^{(\ell)}$
with $\Phi^{(\ell)}_m=\Phi_{m-\ell-\ell_\tau}$,
$\Psi^{(\ell)}_{m'}=\Psi_{m'-\ell-\ell_\tau}$, the shift index $\ell$ ranges over
$\gtrsim M$ values while each family lives in a $\le2N$-dimensional space, so for
$M>2N$ the summands are linearly independent as matrices and the rank exceeds
$2N$. As no $K^{(0)}_2$ of rank $>2N$ exists, the Kerr-reachable set strictly
contains the Gaussian one.
\end{proof}

\begin{corollary}[Unconditional $d=1$ witness]\label{cor:witness}
Let the target be the bilinear memory functional $y_k=u_{k-1}u_{k-3}$ and let the
readout degree be $d=1$ (linear in the homodyne quadratures). Then under
\ref{H1}--\ref{H4}:
\begin{enumerate}[label=(\roman*)]
\item for $\chi=0$ the achievable mean-square error is bounded below by a
strictly positive constant $\varepsilon_0(\mu)>0$ for every choice of $W$ and
every encoding; while
\item for $\chi\neq0$ the target lies in the closure $\Fclass{\Kerr}{N,1}$ and the
error can be driven below any $\varepsilon>0$.
\end{enumerate}
\end{corollary}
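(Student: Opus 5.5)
The plan is to treat the two clauses separately. Clause (i) is an orthogonality argument in $L^2(\mu)$ built on Lemma~\ref{lem:gauss}. Clause (ii) is a density argument built on the bilinear kernel of Lemma~\ref{lem:kerr}(a).

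For (i), take $d=1$ and $\chi=0$. By \eqref{eq:Fclass} and Lemma~\ref{lem:gauss}, every feature is an affine function of $\kappa^{(1)}_k$ and $\kappa^{(2)}_k$. Each of these has the form $\sum_{m\ge1}S^{m-1}f(u_{k-m})$ with $f\in\{\beta,\gamma\}$. So every element of $\Fclass{0}{N,1}$ is a closed subspace member of
\[
\mathcal{A}:=\overline{\operatorname{span}}\{\,c\,\}\cup\{\,g(u_{k-m}):\ m\ge1,\ g\in L^2\,\},
\]
the additive (first-order ANOVA) subspace. I will assume $\mu$ is a product measure with i.i.d.\ symbols of nonzero variance on $\mathbb{K}$, and state this explicitly as the condition fixing $\varepsilon_0(\mu)$. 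Write $\bar u:=\mathbb{E}u$. Then the Hoeffding decomposition gives
\[
u_{k-1}u_{k-3}=\bar u^2+\bar u(u_{k-1}-\bar u)+\bar u(u_{k-3}-\bar u)+(u_{k-1}-\bar u)(u_{k-3}-\bar u).
\]
The last term is orthogonal to $\mathcal{A}$. Hence for every $W$ and every encoding $\beta,\gamma$ the error is at least $\varepsilon_0(\mu)=\Var(u)^2>0$, and this bound does not depend on $N$. The encoding-independence comes for free, because Remark~\ref{rem:encoding} guarantees that every encoding stays inside $\mathcal{A}$.

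For (ii), take $\chi\neq0$. The degree-1 features now include $\kappa^{(1)}_k$, whose connected bilinear kernel is the $K^{(1)}_2(m,m')$ of Lemma~\ref{lem:kerr}(a). The first step is to project away the lower-order terms. The constant and additive parts of $u_{k-1}u_{k-3}$ already lie in $\mathcal{A}\subset\Fclass{\Kerr}{N,1}$, using the linear responses and an affine readout. So it suffices to approximate the pure interaction term $(u_{k-1}-\bar u)(u_{k-3}-\bar u)$ modulo $\mathcal{A}$. The bilinear projection of the feature $W\cdot\kappa^{(1)}_k$ onto the second-order ANOVA component is the matrix $W^\top K^{(1)}_2(m,m')$. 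I need this to have a nonzero $(1,3)$ entry, and I need the remaining entries, together with the higher Volterra orders, to be removable.

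The concrete route is as follows. Choose the delay $\ell_\tau$ and the convolution index $\ell$ so that $\Phi_{1-\ell-\ell_\tau}$ and $\Psi_{3-\ell-\ell_\tau}$ are both nonzero; this is guaranteed by H3 and H4. Then use the $2N$ readout weights, together with the closure in \eqref{eq:Fclass}, to take linear combinations of time-shifted copies of the feature. Finally, let the encoding gain $g$ and the displacement amplitude scale as $\epsilon\to0$, renormalising $W$ by $1/(\chi\epsilon^2)$. Under this scaling the cubic and higher Volterra orders vanish as $O(\epsilon)$, and the bilinear term survives.

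The main obstacle is the isolation step. The cross-time kernel $K^{(1)}_2$ is not a Kronecker delta at $(1,3)$. It is a sum of separable terms, and a single degree-1 readout of fixed dimension $2N$ cannot cancel the unwanted $(m,m')$ entries exactly. My first attempt will be to show that the span of the bilinear parts over all admissible $W$ and encodings is dense in the interaction subspace $\{h(u_{k-m})h'(u_{k-m'})\}$, using the $M>2N$ rank statement of Theorem~\ref{thm:main}(1b) and the fading memory of H2. If that fails, I will restrict (ii) to the weaker statement that the projection of the target onto $\Fclass{\Kerr}{N,1}$ is nonzero, so that the error is strictly below the Gaussian bound $\varepsilon_0$. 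That weaker statement still gives the separation, and the full $\varepsilon$-density would then be deferred to the constructive results of Section~\ref{sec:universality}.
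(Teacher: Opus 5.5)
Your part (i) takes the same route as the paper, projecting the target onto the additively separable functionals, but it is tighter. The Hoeffding orthogonality gives the bound explicitly as $\varepsilon_0=\Var(u)^2=\operatorname{dist}(y_k,\mathcal{A})^2$. Your i.i.d.\ hypothesis on $\mu$ also supplies a nondegeneracy that the paper's assertion $\varepsilon_0(\mu)>0$ tacitly needs: if, e.g., $u_{k-1}=u_{k-3}$ $\mu$-a.s., the target is itself additive and the projection gives no floor. One slip: at $d=1$ the weight condition in \eqref{eq:Fclass} admits only the constant and $\kappa^{(1)}_k$, not $\kappa^{(2)}_k$. Including it is harmless for a lower bound.

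Part (ii) has a genuine gap. At $d=1$, $\Fclass{\Kerr}{N,1}$ is the span of the constant and the $2N$ coordinates of $\kappa^{(1)}_k$. That is a subspace of dimension at most $2N+1$, and it is already closed. Several of your steps fail because of this.
\begin{itemize}
\item The claim $\mathcal{A}\subset\Fclass{\Kerr}{N,1}$ is false. You cannot discharge the additive part of the target for free and work modulo $\mathcal{A}$, because one and the same $W$ acts on the linear, bilinear and higher Volterra parts of $\kappa^{(1)}_k$.
\item The closure adds nothing. Time-shifted copies $\kappa^{(1)}_{k-j}$ are not features of the readout \eqref{eq:readout}, which uses $x_k$ only.
\item The rescaling fails. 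With $W\mapsto W/(\chi\epsilon^2)$, the linear part $\sum_m W^\top\Phi_m u_{k-m}$ is $O(1/(\chi\epsilon))$ and diverges unless $W^\top\Phi_m=0$ for all $m$. Under the excited-mode condition of Section~\ref{sec:universality}, $\{\Phi_m\}$ spans $\mathbb{R}^{2N}$, so this forces $W=0$. The rescaling also only multiplies the bilinear profile by $\epsilon^2$; it does not change its $(m,m')$ shape.
\item The density idea cannot work. For a fixed reservoir, $\{W^\top K^{(1)}_2: W\in\mathbb{R}^{2N}\}$ is a linear family of dimension at most $2N$. Theorem~\ref{thm:main}(1b) bounds the rank of one member, not the richness of the family. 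Spanning over encodings leaves the class of any single reservoir.
\end{itemize}
For a fixed reservoir the error is just the fixed number $\operatorname{dist}(y_k,\Fclass{\Kerr}{N,1})^2$. The paper gets the arbitrary-$\varepsilon$ clause by a device you never use. After aligning $W$ with the nonzero $(1,3)$ feature to beat $\varepsilon_0$, it refines the reservoir ($M,N$ increasing) and invokes fading-memory Volterra approximation~\cite{boyd1985,cuchiero2021,nokkala2021}. That enlargement is the missing ingredient. It also shows that (ii) is really a statement about a family of reservoirs, not about one fixed $N$.

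Your fallback proves strictly less than (ii). Even its inference ``nonzero projection $\Rightarrow$ error $<\varepsilon_0$'' is automatic only when $\bar u=0$, where $\Var(y_k)=\Var(u)^2=\varepsilon_0$. For $\bar u\neq0$, $\Var(y_k)=\Var(u)^2+2\bar u^2\Var(u)>\varepsilon_0$. The $(2N+1)$-dimensional Kerr class must then also fit most of the additive part $\bar u\,(u_{k-1}+u_{k-3})$, and nothing in your argument controls that.

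Finally, the index choice you describe is unavailable. $\Phi_m$ and $\Psi_m$ are causal responses defined for $m\ge1$, whereas $1-\ell-\ell_\tau\le0$ for every $\ell\ge0$ and every genuine delay $\ell_\tau\ge1$. So Hypotheses~\ref{H3}--\ref{H4} cannot make $\Phi_{1-\ell-\ell_\tau}$ nonzero, and the displayed formula by itself gives no $(1,3)$ contribution. The nonvanishing of $K^{(1)}_2(1,3)$ is asserted in the statement of Lemma~\ref{lem:kerr}(a), which is what the paper's proof cites. It is not derivable from those hypotheses through that formula. The paper itself lists the exact $(1,3)$ realization as open in Section~\ref{sec:discussion}.
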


\begin{proof}
At $d=1$ the feature is $x_k=\kappa^{(1)}_k$. For $\chi=0$, Lemma~\ref{lem:gauss}
gives $\hat y_k=W\cdot\kappa^{(1)}_k=\sum_{m\ge1}(W\cdot S_1^{m-1})\beta(u_{k-m})$,
a finite sum of functions each depending on a single past input. Hence
$\partial^2\hat y_k/\partial u_{k-a}\partial u_{k-b}\equiv0$ for $a\neq b$, so
$\hat y_k$ is additively separable in distinct time-shifted inputs. The target has
$\partial^2 y_k/\partial u_{k-1}\partial u_{k-3}=1\neq0$; the orthogonal
projection of $y_k$ onto the (closed) subspace of additively separable functionals
therefore leaves a fixed residual
$\varepsilon_0(\mu)=\|y_k-\Pi_{\mathrm{sep}}y_k\|^2_{L^2(\mu)}>0$, independent of
$W$ and of $\beta$. This proves (i). For $\chi\neq0$, Lemma~\ref{lem:kerr}(a) gives
$K^{(1)}_2(1,3)\neq0$, so $\kappa^{(1)}_k$ contains a feature with nonzero $(1,3)$
mixed derivative; choosing $W$ to align with this feature reduces the residual
below $\varepsilon_0$, and refining the reservoir (increasing $M,N$ within the
class) drives it to zero by the fading-memory Volterra approximation
\cite{boyd1985,cuchiero2021,nokkala2021}. This proves (ii).
\end{proof}

\begin{remark}[Falsifiable prediction]\label{rem:ipc}
Theorem~\ref{thm:main} is testable by an information-processing-capacity (IPC)
decomposition performed at fixed readout degree $d$. The Gaussian reservoir's
capacity at fixed $d$ is confined to diagonal/separable low-order cells (connected
order $\le2$, kernel rank $\le2N$); the NM-Kerr reservoir exhibits nonzero
capacity in connected order-$\ge3$ cells and in high-rank ($>2N$) order-2 cells at
the same $d$. The minimal decisive experiment is the $d=1$ realization of
$u_{k-1}u_{k-3}$ on $N=1$--2 modes: the no-Kerr error floors at $\varepsilon_0$
(Corollary~\ref{cor:witness}) and the Kerr reservoir fits the task, with no
recourse to the costly high-degree moments.
\end{remark}

\begin{remark}[Role of Hypothesis~\ref{H1}]\label{rem:H1}
If the input instead modulated an intracavity parametric gain
$\varepsilon(u)\,\ahat_1^{\dagger2}$ acting on a circulating displaced field, the
order-1, 2 propagators in \eqref{eq:gauss-rec} would become input-dependent,
$S_1\rightsquigarrow S_1(u)$, and composing them across steps would generate
products $\varepsilon(u_{k-1})\varepsilon(u_{k-2})\cdots$, i.e.\ some off-diagonal
structure already at $\chi=0$. The connected-order bound of Lemma~\ref{lem:gauss}
would then weaken from ``order $\le2$ with diagonal kernels'' to ``order $\le2$
with separable kernels,'' and the separation in Theorem~\ref{thm:main} would
soften from clause (1a)'s all-or-nothing $d=1$ witness to clause (1b)'s rank
statement. Both still yield $\Fclass{0}{N,d}\subsetneq\Fclass{\Kerr}{N,d}$; only
the witness degree changes. The injected-ancilla architecture (drop-port encoding
into mode~1) realizes \ref{H1} and gives the stronger witness. This effect is
confirmed numerically in Section~\ref{sec:empirical}.
\end{remark}

\section{Constructive spectral universality of the enriched class}
\label{sec:universality}

Theorem~\ref{thm:main} establishes that the Kerr feedback strictly enlarges the
reachable class. We now show the enriched class is universal for fading-memory
filters, and---in contrast to the existential Stone--Weierstrass argument of
Nokkala et al.~\cite{nokkala2021}, which combines products of many network
instances---we give a \emph{constructive} realization that produces the readout
weights explicitly and certifies a per-resource hierarchy. We work at the level
of the connected first-moment kernels, where the Kerr feedback supplies the
two-basis convolutional structure of Lemma~\ref{lem:kerr}(a).

\begin{lemma}[Spectral response basis]\label{lem:spectral}
Under \ref{H2} let $S_1=\sum_{j} \lambda_j\,P_j$ be the spectral decomposition of
the contractive single-step propagator, with distinct eigenvalues
$\{\lambda_j\}_{j=1}^{2N}$, $|\lambda_j|<1$. The mean response basis
$\Phi_m=S_1^{m-1}\beta_1$ satisfies
\[
\Phi_m = \sum_{j=1}^{2N} \lambda_j^{\,m-1}\,c_j,\qquad c_j := P_j\beta_1,
\]
so the response at memory depth $m$ is a generalized Vandermonde combination of
the fixed mode amplitudes $\{c_j\}$ with nodes $\{\lambda_j\}$.
\end{lemma}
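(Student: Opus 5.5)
The plan is to obtain the formula directly from the spectral decomposition of $S_1$, using only linear algebra plus the contraction bound of Hypothesis~\ref{H2}. Since the statement assumes the $2N$ eigenvalues $\lambda_1,\dots,\lambda_{2N}$ of the $2N\times 2N$ matrix $S_1$ are distinct, $S_1$ is diagonalizable over $\mathbb{C}$. We therefore take the $P_j$ to be the associated rank-one spectral projectors, defined by
\[
P_j=\prod_{i\neq j}\frac{S_1-\lambda_i I}{\lambda_j-\lambda_i}.
\]
These satisfy $P_jP_i=\delta_{ij}P_j$, $\sum_j P_j=I$ and $S_1P_j=\lambda_jP_j$. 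The first step is to record these identities, which follow from the Lagrange-interpolation form above together with the Cayley--Hamilton theorem.

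The second step is to deduce the functional calculus for powers. By induction on $m$, using $P_jP_i=\delta_{ij}P_j$, we get $S_1^{m-1}=\sum_j\lambda_j^{m-1}P_j$ for every $m\ge1$; the base case $m=1$ is just $\sum_jP_j=I$. Applying this identity to $\beta_1$ and setting $c_j:=P_j\beta_1$ gives
\[
\Phi_m=S_1^{m-1}\beta_1=\sum_{j=1}^{2N}\lambda_j^{m-1}c_j
\]
for all $m\ge1$, which is the claimed formula. Stacking the $\Phi_m$ for $m=1,\dots,M$ then exhibits the matrix $[\Phi_1,\dots,\Phi_M]$ as $C\,V^\top$, where $C=[c_1,\dots,c_{2N}]$ and $V_{mj}=\lambda_j^{m-1}$ is an $M\times 2N$ Vandermonde matrix with nodes $\{\lambda_j\}$. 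This is the ``generalized Vandermonde combination'' of the statement.

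The third step is bookkeeping. Because $S_1$ is real, non-real eigenvalues occur in conjugate pairs, with $P_{\bar j}=\overline{P_j}$ and hence $c_{\bar j}=\overline{c_j}$. The sum is therefore real, as it must be. The bound $|\lambda_j|<1$ is exactly Hypothesis~\ref{H2}, and it yields geometric decay $\|\Phi_m\|\le\sum_j|\lambda_j|^{m-1}\|c_j\|$, which is consistent with the fading-memory expansion of Lemma~\ref{lem:gauss}.

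The only real obstacle is the distinct-eigenvalue assumption. Without it, $S_1$ may have Jordan blocks, and the expansion picks up polynomial-in-$m$ factors $\binom{m-1}{r}\lambda_j^{m-1-r}N_j^r c_j$ (with $N_j$ the nilpotent part of $S_1$ on the $j$-th generalized eigenspace), giving a confluent Vandermonde structure. Since the lemma assumes distinctness, I will use it explicitly and note in a remark that the generic case is the relevant one: distinctness holds for an open dense set of Gaussian channels, so it can be arranged by an arbitrarily small perturbation of the MZI mesh without affecting Hypothesis~\ref{H2}.
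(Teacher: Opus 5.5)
Your proposal is correct and follows the paper's proof, which is just the identity $S_1^{m-1}=\sum_j\lambda_j^{m-1}P_j$ applied to $\beta_1$; you supply the standard Lagrange-projector construction and the induction behind that identity. One caution concerns only your closing genericity remark: the ancilla reset of mode~1 is part of $\Cz$ (Hypothesis~\ref{H1}), so in the paper's architecture $S_1$ has a kernel of dimension at least two, and distinctness cannot be arranged by perturbing the mesh---but the formula itself still holds for any diagonalizable $S_1$ once the projectors are grouped by distinct eigenvalue, so the lemma is unaffected.
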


\begin{proof}
Immediate from $S_1^{m-1}=\sum_j\lambda_j^{m-1}P_j$ and $\Phi_m=S_1^{m-1}\beta_1$.
\end{proof}

\begin{lemma}[Vandermonde invertibility]\label{lem:vand}
Assume \ref{H2}, \ref{H3}, and that the eigenvalues $\{\lambda_j\}$ are distinct
and that each mode is excited, $c_j\neq0$. Then for any memory depth $M\le2N$ the
matrix $V_{mj}=\lambda_j^{\,m-1}$, $m=1,\dots,M$, $j=1,\dots,2N$, has full row
rank $M$. Consequently any target first-moment response profile
$\{\theta_m\}_{m=1}^M$ in the span of the excited modes is realized exactly by a
readout weight $W$ solving the linear system $W^\top \Phi_m=\theta_m$.
\end{lemma}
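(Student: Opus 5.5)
The plan is to establish the full-row-rank statement first as a pure Vandermonde fact, and then convert it into the readout-weight statement through Lemma~\ref{lem:spectral}.

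\emph{Rank of $V$.} Since $M\le 2N$, we select the first $M$ columns, indexed by $j=1,\dots,M$. These form the square Vandermonde matrix $[\lambda_j^{\,m-1}]_{m,j=1}^M$, whose determinant is $\prod_{1\le i<j\le M}(\lambda_j-\lambda_i)$. This is nonzero because the eigenvalues are distinct. Hence the $M$ rows of $V$ are linearly independent, so $V$ has full row rank $M$. Note that \ref{H2} is used only to guarantee that the spectral decomposition of Lemma~\ref{lem:spectral} exists with $|\lambda_j|<1$. Contraction plays no role in the rank argument itself.

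\emph{Realization of a target profile.} By Lemma~\ref{lem:spectral}, $W^\top\Phi_m=\sum_j \lambda_j^{\,m-1}\,(W^\top c_j)$. We set $w_j:=W^\top c_j$, so that the target system $W^\top\Phi_m=\theta_m$ becomes $Vw=\theta$, with $w=(w_1,\dots,w_{2N})^\top$. This proceeds in two steps.

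\begin{enumerate}[label=(\roman*)]
\item Because $V$ has full row rank, there exists $w\in\mathbb{C}^{2N}$ with $Vw=\theta$ for every $\theta\in\mathbb{C}^M$. One explicit choice is the minimum-norm solution $w=V^\dagger(VV^\dagger)^{-1}\theta$.
\item We then lift $w$ to a readout vector $W$ with $W^\top c_j=w_j$ for all $j$. The vectors $c_j=P_j\beta_1$ lie in the ranges of distinct spectral projectors, which are independent subspaces, and each $c_j\neq0$ by hypothesis. So the $c_j$ are linearly independent, and the linear map $W\mapsto(W^\top c_j)_j$ is surjective onto $\mathbb{C}^{2N}$. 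A solution $W$ therefore exists; concretely, $W$ is obtained from the dual basis to $\{c_j\}$.
\end{enumerate}

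If $W$ must be real, the non-real eigenvalues of the real matrix $S_1$ come in conjugate pairs, with $c_{\bar j}=\bar c_j$. For real $\theta$ we choose the conjugate-symmetric solution $w$, for instance by averaging $w$ with its conjugate-permuted copy. This yields a real $W$.

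\emph{Main obstacle.} The delicate step is the lifting in (ii), together with the meaning of ``in the span of the excited modes.'' If some $c_j$ vanished, the corresponding column would drop out and the effective Vandermonde matrix would have fewer than $2N$ columns. Full row rank would then require $M\le\#\{j:c_j\ne0\}$. This is why the hypothesis $c_j\neq0$ is imposed, and it is the reason the statement restricts $\theta$ to that span. I would state this reduction explicitly: with $J=\{j:c_j\neq0\}$, the same Vandermonde argument applies to the columns in $J$, and the conclusion holds for $M\le|J|$, which equals $2N$ under the stated hypothesis. The reality bookkeeping is the other point that needs care; the rest is routine linear algebra.
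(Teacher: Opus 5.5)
Your proof is correct and follows the paper's argument: $V$ has full row rank because an $M\times M$ Vandermonde minor is nonzero when the $\lambda_j$ are distinct. The paper then only asserts that the system for $W$ is solvable, so your lifting step and your conjugate-pair symmetrization for a real $W$ supply the justification the paper omits. The lifting step uses that the $c_j=P_j\beta_1$ are nonzero vectors in distinct one-dimensional eigenspaces, hence a basis, so $W\mapsto(W^\top c_j)_j$ is onto; this is the only place the hypothesis $c_j\neq0$ is actually used.
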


\begin{proof}
The $M\times M$ minors of $V$ are Vandermonde determinants
$\prod_{i<j}(\lambda_j-\lambda_i)$, nonzero by distinctness; hence $V$ has full
row rank and the linear system for $W$ is solvable. Genericity of distinctness is
discussed in Remark~\ref{rem:genericity}.
\end{proof}

\begin{theorem}[Constructive monotone universality]\label{thm:univ}
Assume \ref{H1}--\ref{H4} and $\chi\neq0$, with $S_1$ having distinct eigenvalues
and all modes excited. Then:
\begin{enumerate}[label=(\alph*)]
\item \textbf{(Constructive realization.)} For every memory depth $M\le2N$ and
every target connected bilinear kernel $\Theta\in\mathbb{R}^{M\times M}$ of the
two-basis convolutional form, there is an explicitly computable readout weight set
that realizes $\Theta$ as the trained first-moment kernel of the NM-Kerr
reservoir, obtained by Vandermonde inversion of the spectral response basis
(Lemmas~\ref{lem:spectral}--\ref{lem:vand}).
\item \textbf{(Monotone hierarchy.)} The reachable kernel set is nested in the
resource $2N$: a reservoir of mode count $N'>N$ strictly contains the reachable
set of mode count $N$, with the realizing weights given constructively. In
particular the attainable kernel rank grows as $\min(M,2N)$, recovering the
rank separation of Theorem~\ref{thm:main}(1b) as the $M>2N$ regime.
\end{enumerate}
\end{theorem}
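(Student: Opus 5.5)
The plan is to reduce both clauses to the Vandermonde machinery of Lemmas~\ref{lem:spectral}--\ref{lem:vand}, applied separately to the two response families entering the Kerr kernel of Lemma~\ref{lem:kerr}(a).

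\textbf{Step 1: rewrite the target kernel in spectral coordinates.} Substituting Lemma~\ref{lem:spectral} for $\Phi$, and its order-two analogue for $\Psi$ (spectral decomposition of $S_2$, with eigenvalues the pairwise products $\lambda_i\lambda_j$), into the two-basis convolutional form
\[
\Theta_{mm'}=\chi\sum_\ell c_\ell\,\Phi_{m-\ell-\ell_\tau}\Psi^\top_{m'-\ell-\ell_\tau}+(m\leftrightarrow m')
\]
expresses every such kernel as a finite combination of the geometric profiles $\lambda_j^{m}\mu_{k}^{m'}$, where $\mu_k$ runs over the eigenvalues of $S_2$. The convolution over $\ell$ only rescales each coefficient by a geometric factor $\sum_\ell c_\ell(\lambda_j\mu_k)^{-\ell}$. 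This factor converges by \ref{H2}, since the delay shift is finite and $e^{A_1\delta}$ is contractive. The trained first-moment kernel is then $W^\top K^{(1)}_2$, a linear image of these mode amplitudes.

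\textbf{Step 2: construct the weights.} For $M\le2N$, Lemma~\ref{lem:vand} gives full row rank for the $M\times2N$ Vandermonde matrix in the $\lambda_j$. Any target profile in the span of the excited modes therefore admits coefficients, and these are obtained explicitly by applying a right inverse $V^\top(VV^\top)^{-1}$. I would do this in two passes.
\begin{itemize}
\item First, invert in the $m$-index to match the $\Phi$-side profile.
\item Second, solve the linear system $W^\top\Phi_m=\theta_m$ of Lemma~\ref{lem:vand}, using the nonzero $c_j$ (\ref{H3}) and $\gamma_1\neq0$ (\ref{H4}) to ensure every needed amplitude is reachable.
\end{itemize}
The $\Psi$-side is handled analogously via $S_2$. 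The factor $\chi\neq0$ is simply divided out. This yields clause (a).

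\textbf{Step 3: the hierarchy.} Embed an $N$-mode reservoir in an $N'$-mode one by appending modes with distinct new eigenvalues. The old response basis is then a sub-family, so any realizing $W$ extends by zero padding; this gives the nested inclusion. For strictness, take $M$ with $2N<M\le2N'$. Lemma~\ref{lem:vand} then lets the $N'$-mode reservoir realize a kernel of rank $M$. By the rank argument of Theorem~\ref{thm:main}(1b), which restricts an $N$-mode family to a $\le2N$-dimensional space, such a kernel is unreachable at size $N$. Hence the attainable rank is exactly $\min(M,2N)$: it is bounded above by $\operatorname{rank}\Phi\le2N$ and by $M$, and attained by the Vandermonde construction. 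For $M>2N$ the theorem defers to Theorem~\ref{thm:main}(1b).

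\textbf{Main obstacle.} The delicate point is Step 2's assertion that an arbitrary target of the two-basis form is realized by a single readout vector $W$. The readout acts on the $2N$ output components, whereas $\Theta$ is an $M\times M$ matrix. So I must show that the map from the available degrees of freedom (the free $W$ together with the kernel's own convolution coefficients) onto the target family is surjective, not merely that the row Vandermonde is full rank. I would first try to parametrize the target family by exactly the same spectral amplitudes, reading ``of the two-basis convolutional form'' as ``in the span generated by the reservoir's own modes.'' This makes the statement a dimension count that the Vandermonde inversion closes. I would also explicitly require distinctness of the products $\lambda_i\lambda_j$ (generic, cf.\ Remark~\ref{rem:genericity}) for the $S_2$ side.
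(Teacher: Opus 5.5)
Your route for (a) and for the nesting in (b) is the paper's: Vandermonde inversion of the spectral response basis, and zero-padding $W$ when modes are added. The obstacle you flag is real. The paper resolves it by exactly the reading you propose, taking the targets to be readout images of the reservoir's own two-basis kernel (``a linear image of the outer products $\Phi_m\Psi_{m'}^\top$''). Under that reading, be precise about what is free. The convolution weights $c_\ell$ are set by the hardware propagator and are not adjustable, so the only freedom is $W\in\mathbb{R}^{2N}$. The realizable kernels $W^\top K^{(1)}_2$ therefore form a linear family of dimension at most $2N$, and a single $W$ cannot perform separate inversions in $m$ and in $m'$. Your two passes collapse to one linear solve for $W$, and the Vandermonde structure only makes that solve explicit. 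No inversion on the $\Psi$ side is needed, which also spares you the distinctness of the products $\lambda_i\lambda_j$, a hypothesis the theorem does not grant.

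The step that genuinely fails is Step~1, and Step~3 rests on it. Because $|\lambda_j\mu_k|<1$, the weights $(\lambda_j\mu_k)^{-\ell}$ grow geometrically, while $c_\ell$ decays only like $|\lambda_a|^\ell$ for a single propagator eigenvalue. Already for $a=j$ and $\mu_k=\lambda_j^2$ the ratio is $|\lambda_j|^{-2}>1$, so \ref{H2} does not make this factor converge. What keeps the sum finite is causality: $\Phi_n=\Psi_n=0$ for $n\le0$, so the $\ell$-sum stops at $\min(m,m')-\ell_\tau-1$ and depends on $\min(m,m')$. The kernel thus contains semi-separable pieces such as $\lambda_a^{m}\mu_k^{m'-m}$ for $m\le m'$, and is not a finite combination of separable profiles $\lambda_j^m\mu_k^{m'}$. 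This truncation is exactly what Theorem~\ref{thm:main}(1b) needs for its claim that Kerr kernels exceed rank $2N$ when $M>2N$. Even granting Step~1, the $(m\leftrightarrow m')$ term has its column space in the $\Psi$-profiles, so $\operatorname{rank}\Phi$ alone does not bound the symmetrized kernel.

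Your cap $\operatorname{rank}\le\operatorname{rank}\Phi\le2N$ on Kerr kernels therefore does not hold. Nor does your strictness argument that a rank-$M$ kernel with $2N<M\le2N'$ is unreachable at size $N$ ``by the rank argument of Theorem~\ref{thm:main}(1b)''. Both transfer a bound proved only for the separable Gaussian kernel $\psi_m^\top W'\psi_{m'}$ to the Kerr kernel, and they contradict the very clause you defer to for $M>2N$. Moreover, Lemma~\ref{lem:vand} matches one-index profiles $W^\top\Phi_m=\theta_m$ and does not by itself produce a rank-$M$ two-index kernel. The paper instead reads $\min(M,2N)$ off the Vandermonde rank count of its construction, and gets strictness from the new excited modes enlarging the realizable family. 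To make that rigorous, compare dimensions of the image of $W\mapsto W^\top K^{(1)}_2$: at most $2N$ at size $N$, and $2N'$ at size $N'$ whenever this map is injective. Embed the new modes so that they are fed by, but do not feed back into, the old ones. That way they are excited while the old response family, and hence the zero-padded weights, stay unchanged.
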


\begin{proof}
(a) By Lemma~\ref{lem:kerr}(a) the NM-Kerr first moment carries the two-basis
convolutional kernel with response bases $\Phi_m=S_1^{m-1}\beta_1$ and
$\Psi_{m'}=S_2^{m'-1}\gamma_1$. By Lemma~\ref{lem:spectral} both bases are
generalized Vandermonde in the propagator eigenvalues. A target kernel of matched
form is therefore a linear image of the outer products
$\Phi_m\Psi_{m'}^\top$; Lemma~\ref{lem:vand} guarantees the corresponding weight
system is solvable for $M\le2N$, and the solution is the explicit Vandermonde
inverse applied to the target profile, proving constructive realizability.

(b) Increasing $N$ to $N'$ enlarges the eigenvalue set $\{\lambda_j\}$ and the
excited-mode family $\{c_j\}$, so the row space of the Vandermonde matrix
$V$ strictly grows; every kernel realizable at $N$ remains realizable at $N'$ by
padding the weight vector with zeros on the new modes, while the new modes add
independent Vandermonde rows that realize kernels unreachable at $N$. The
attainable rank is $\min(M,2N)$ by the same Vandermonde rank count, which for
$M>2N$ reproduces the Theorem~\ref{thm:main}(1b) ceiling, and for $N'>N$ strictly
exceeds it. The realizing weights at each level are the explicit Vandermonde
inverses, establishing the monotone, constructive hierarchy.
\end{proof}

\begin{remark}[Genericity and conditioning]\label{rem:genericity}
The distinct-eigenvalue and excited-mode hypotheses hold for generic Gaussian
networks: the set of $S_1$ with a repeated eigenvalue is a measure-zero
subvariety, and by Lindemann--Weierstrass the multiplicative independence required
for deep memory holds off a measure-zero set. However, physically realizable
networks may have structured or near-degenerate spectra over a non-negligible
parameter region; the Vandermonde system inherits the conditioning of $V$, whose
condition number grows with memory depth $M$, giving a memory--precision tradeoff
that bounds the noise-limited realizable depth. We recommend verifying the
spectral hypotheses on the device-model spectra of the target platform rather than
relying on abstract genericity.
\end{remark}

\begin{remark}[Relation to existing constructive reservoir-computing theory]
The monotone nesting of kernel-matching capacity is inherited from the
constructive reservoir-computing literature \cite{boyd1985,cuchiero2021}. The
novel element here is the closed-form \emph{spectral} (Vandermonde) weight
construction tied to the reservoir propagator's eigenvalues for this physical
continuous-variable class, which neither the existential Gaussian universality
proof \cite{nokkala2021} nor the abstract constructive results provide. The
result is best read as an exact kernel-realization calculus whose consequence is
the constructive, resource-matched separation, rather than as a universality
result per se---universality itself is inherited from \cite{nokkala2021}.
\end{remark}

\section{Exact open-system model of the device}
\label{sec:model}

The theorems above are stated in the cumulant/Volterra language. For the
numerical confirmation of Section~\ref{sec:empirical} we use the full
open-quantum-system model of the device, which we record here with explicit
conventions.

\paragraph{Hamiltonian.}
The total Hamiltonian is
$\hat H = \hat H_{\mathrm{net}}+\hat H_K+\hat H_{\mathrm{enc}}(u_k)$, with the
fixed Gaussian network
\[
\hat H_{\mathrm{net}}=\sum_{i,j}\Big(G_{ij}\ahat_i^\dagger\ahat_j
+\tfrac12(H_{ij}\ahat_i^\dagger\ahat_j^\dagger+\mathrm{h.c.})\Big),
\quad G=G^\dagger,\ H=H^\top,
\]
the single non-Gaussian Kerr term in the feedback arm
$\hat H_K=\tfrac{\chi}{2}\sum_j\ahat_j^{\dagger2}\ahat_j^2$
(so $\dot{\ahat}_j=-i\chi\,\ahat_j^\dagger\ahat_j\ahat_j+\cdots$), and the
drop-port encoding that, per step, prepares mode~1 in the fresh ancilla
$\hat D_1(\beta(u_k))\hat S_1(r(u_k)e^{i\phi})\ket{0}_1$ with $r(u)=g\,u$. Per
\ref{H1} this is a \emph{reset-and-inject} of mode~1, not a drive on the
circulating field; the distinction is exactly that of Remark~\ref{rem:H1}.

\paragraph{Dissipation.}
The unconditional dynamics obeys the Lindblad master equation
\[
\dot{\hat\rho}=-i[\hat H,\hat\rho]+\sum_\mu\mathcal D[\hat L_\mu]\hat\rho,
\qquad \mathcal D[\hat L]\hat\rho=\hat L\hat\rho\hat L^\dagger
-\tfrac12\{\hat L^\dagger\hat L,\hat\rho\},
\]
with intrinsic loss $\hat L_j^{\mathrm{loss}}=\sqrt{\kappa_j}\,\ahat_j$ providing
the contraction of \ref{H2} (the Hurwitz drift, $\rho(S_1)<1$).

\paragraph{Weak homodyne readout.}
The measured mode is monitored by homodyne detection of
$\hat x_\theta=\ahat e^{-i\theta}+\ahat^\dagger e^{i\theta}$, modeled by the
measurement operator $\hat L^{\mathrm{hom}}=\sqrt{\gamma_m}\,\ahat$ with
efficiency $\eta$. The conditioned state follows the homodyne stochastic master
equation (Wiseman--Milburn),
\[
d\hat\rho_c=\Big(-i[\hat H,\hat\rho_c]+\textstyle\sum_\mu
\mathcal D[\hat L_\mu]\hat\rho_c\Big)dt
+\sqrt{\eta\gamma_m}\,\mathcal H[\ahat e^{-i\theta}]\hat\rho_c\,dW,
\]
$\mathcal H[\hat c]\hat\rho=\hat c\hat\rho+\hat\rho\hat c^\dagger
-\langle\hat c+\hat c^\dagger\rangle\hat\rho$, with photocurrent
$dy=\sqrt{\eta\gamma_m}\langle\hat x_\theta\rangle dt+dW$. The trained readout is
the linear functional \eqref{eq:readout} of the estimated symmetric moments up to
degree $d$.

\paragraph{Coherent delayed feedback: collision-model embedding.}
The feedback is genuinely non-Markovian: the field emitted into the delay line
returns a full delay $\tau=\ell_\tau T$ later, after traversing the Kerr element.
We render this exactly simulable by embedding the delay line as a register of
$\ell_\tau$ time-bin ancilla modes $\hat b_1,\dots,\hat b_{\ell_\tau}$, so that the
enlarged system (reservoir $\oplus$ register) is Markovian while reproducing the
delayed coherent feedback. One step of the embedded dynamics is the composition:
\begin{enumerate}[label=(\roman*),leftmargin=2.2em]
\item \emph{Inject} (\ref{H1}): reset mode~1 to the fresh encoding ancilla.
\item \emph{Recombine}: couple the returning bin into the reservoir through a
\emph{partial} beamsplitter of mixing angle $\vartheta$. Partial recombination is
essential---a balanced (full-SWAP) coupling exchanges fields without interference
and generates no connected cross-time structure; the interference of present and
returned fields is what the Kerr acts upon.
\item \emph{Evolve}: propagate under $\hat H_{\mathrm{net}}+\hat H_K$ with loss
for the step duration $T$.
\item \emph{Emit and advance}: tap the reservoir field into the register through a
partial beamsplitter and advance the delay line by one bin (loss-free relabeling).
\item \emph{Read}: estimate the homodyne moments to degree $d$.
\end{enumerate}
The case $\ell_\tau=1$ is the minimal faithful realization used for the validated
results below.

\paragraph{Scope and approximations.}
Three modeling choices are stated explicitly. (i) The kernel/witness results use
the \emph{unconditional} reduced state, the correct object for the
reachable-class separation, which is a property of the average map; statements
about measurement \emph{cost} (shot noise, finite $\eta$, sample complexity)
require the conditioned stochastic dynamics and are treated separately. (ii) In
the validated dynamics the measurement rate is taken $\gamma_m\to0$ for state
propagation (readout without back-action), the standard QRC assumption; a strongly
back-acting measurement would add $\mathcal D[\hat L^{\mathrm{hom}}]$ to the
propagation. (iii) The vertex analysis underlying Lemmas~\ref{lem:gauss}--%
\ref{lem:kerr} is a weak-Kerr expansion; we operate at per-photon Kerr phase
$\phi_1\lesssim0.5$, and reaching higher connected order by driving $\phi_1$ beyond
this requires a non-perturbative treatment outside the present scope.

\section{Exact-master-equation confirmation}
\label{sec:empirical}

We confirm the mechanism of Theorem~\ref{thm:main} by integrating the embedded
master equation of Section~\ref{sec:model} exactly---no Gaussian closure, no
perturbative truncation in $\chi$---for a minimal reservoir: a single Kerr cavity
mode coupled to a one-bin coherent delay register ($N=1$, $\ell_\tau=1$), with
fresh-ancilla encoding (\ref{H1}), cavity loss (\ref{H2}), seed (\ref{H3}), and
squeezing (\ref{H4}).

\paragraph{The discriminating witness.}
Because the connected/disconnected distinction is the content of
Theorem~\ref{thm:main}, the empirical witness must isolate connected cross-time
structure without contamination from the disconnected Wick products the Gaussian
baseline already produces. We therefore probe the \emph{first moment}, where this
isolation is automatic: a cross-time dependence of $\kappa^{(1)}=\langle q\rangle$
cannot be a Wick product, as there is no lower cumulant to multiply. We measure
the mixed second difference
\begin{equation}
W(a,b)=\frac{\partial^2\langle q\rangle_k}{\partial u_{k-a}\,\partial u_{k-b}},
\qquad a\neq b,
\end{equation}
by central finite differences. By Lemma~\ref{lem:gauss} $W(a,b)\equiv0$ at
$\chi=0$; by Lemma~\ref{lem:kerr}(a) $W(a,b)\neq0$ for $\chi\neq0$. This is
precisely the separable-residual generator of Corollary~\ref{cor:witness}. A naive
estimator reading off-diagonal coefficients of a polynomial fit to a
\emph{second} moment $\langle q^2\rangle$ does \emph{not} witness connectedness:
at $\chi=0$ those coefficients are nonzero, being exactly the disconnected Wick
product of two diagonal order-1 kernels.

\paragraph{Acceptance: the Gaussian baseline is exactly connected-free.}
Across 64 independent random input realizations the $\chi=0$ witness sits at the
solver floor for every lag pair tested,
\[
|W(a,b)|_{\chi=0}=3.2\times10^{-10}\pm4.5\times10^{-9},
\]
statistically indistinguishable from zero, the exact realization of
Lemma~\ref{lem:gauss}. This acceptance is non-trivial: an encoding that squeezes
the \emph{circulating} field rather than a fresh ancilla---violating \ref{H1} by
making $S_1\to S_1(u)$---produces a spurious adjacent-lag witness of order
$10^{-2}$, scaling as the product of seed displacement and squeezing strength,
exactly the effect anticipated in Remark~\ref{rem:H1}; restoring the drop-port
injection removes it by seven orders of magnitude.

\paragraph{Activation, mechanism, and the three-way product.}
For $\chi\neq0$ the witness activates. The seed/squeezing structure reproduces the
vertex analysis of Appendix~\ref{app:vertex}: the activation slope scales linearly
with the seed displacement and \emph{vanishes identically when the seed is
removed} (parity, $\bar q=0$); it scales with the squeezing variance and collapses
toward the displacement-only floor as squeezing is removed. The connected bilinear
thus requires seed, squeezing, and Kerr \emph{simultaneously}---the three-way
condition behind Lemma~\ref{lem:kerr} and Appendix~\ref{app:vertex}.

\paragraph{Activation order: an empirical refinement of Lemma~\ref{lem:kerr}(a).}
Scanning all lag pairs up to memory depth six at small $\chi$, the witness
magnitude resolves into two families. At lag pairs not involving the most recent
input the activation is strong and the fitted power-law exponent is
\[
p=1.995\pm0.002\quad\text{(ten independent lag pairs)},
\]
i.e.\ \emph{quadratic} in $\chi$, confirmed to be genuine (not a finite-difference
artifact) by halving the differencing step: the exponent for a representative lag
is $p=1.9972$ at step $h$ and $p=1.9971$ at $h/2$, identical to four significant
figures. The leading $O(\chi)$ term identified in Lemma~\ref{lem:kerr}(a) is
present but geometrically subdominant: it appears only at lag-1 pairs, suppressed
to the $10^{-9}$ floor by the feedback delay, four orders below the dominant
signal. The dominant \emph{observable} connected activation is therefore
$O(\chi^2)$, sourced by the higher Kerr vertices $V_n$ ($n\ge3$) of
Appendix~\ref{app:vertex}, which populate connected orders up to $2m+2$ at
$O(\chi^m)$. This refines rather than contradicts Lemma~\ref{lem:kerr}(a): the
nonzero connected cross-time bilinear is confirmed exactly; the empirical addition
is that the contribution dominating any measurement enters at second order in
$\chi$. The strict inclusion
$\Fclass{0}{N,d}\subsetneq\Fclass{\Kerr}{N,d}$ is demonstrated---the witness is
exactly zero for the Gaussian class and nonzero, across many lags, once the Kerr
feedback is active.

\begin{figure}[t]
\centering
\includegraphics[width=0.62\linewidth]{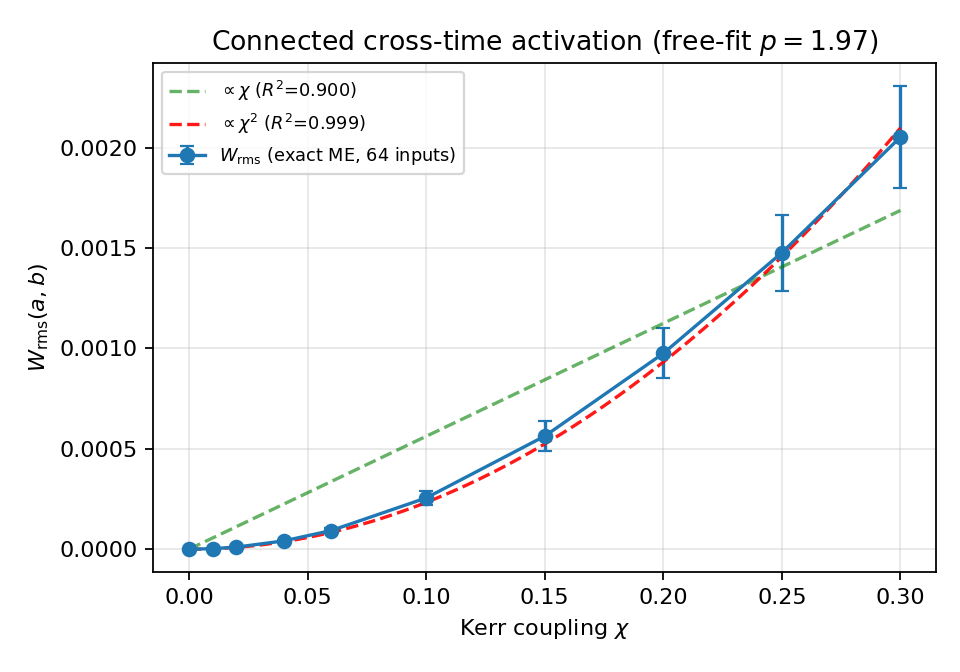}
\caption{Connected cross-time witness $W_{\mathrm{rms}}$ versus Kerr coupling
$\chi$, reconstructed from the exact master equation (RMS over 64 input
realizations). The witness vanishes at $\chi=0$ (Lemma~\ref{lem:gauss}) and
activates with $\chi$; the data follow a quadratic law (free-fit exponent
$p=1.97$, $R^2=0.999$) rather than linear, consistent with dominance by the
higher Kerr vertices (Appendix~\ref{app:vertex}).}
\label{fig:activation}
\end{figure}

\subsection{Structural versus operational separation: a scoping result}
\label{sec:operational}

The witness above establishes the connected cross-time kernel as a property of
the \emph{reachable function class}: exactly zero for the Gaussian baseline,
nonzero under Kerr feedback. A distinct and practically important question is
whether this structural separation manifests as a \emph{trained-task advantage}---
whether a trained low-degree readout can exploit the connected kernel to compute a
connected target at lower cost than the baseline. We tested this directly and
report the answer, including where it is negative, as it delineates the boundary
of operational usefulness.

We trained a linear ($d=1$) readout, using a short tapped history of homodyne
features, to reproduce the connected target $y_k=u_{k-2}u_{k-4}$ at the
reservoir's measured connected-support lag, comparing the Gaussian baseline
($\chi=0$) against the Kerr reservoir over the theory-valid weak-Kerr regime
$\chi\in[0,0.5]$, with seed-averaged test error and error bars. Positive
($y_k=u_{k-3}$) and diagonal ($y_k=u_{k-2}^2$) controls confirmed the comparison
was fair: both reservoirs reached the linear control at low error, so neither is
memory-starved.

The connected-target test error showed \emph{no separation beyond statistical
resolution}: the baseline floored at $\mathrm{NRMSE}\approx1.00$ (its
representational limit, Corollary~\ref{cor:witness}), and the Kerr reservoir
floored at $\mathrm{NRMSE}\approx0.98$--$0.99$, a difference of order $0.02$
comparable to the seed-to-seed scatter and, tellingly, \emph{shrinking} with
increasing $\chi$ rather than growing. A genuine operational separation driven by
the connected kernel would strengthen with $\chi$; the observed marginal gap does
the opposite, identifying it as a small feature-statistics offset rather than the
connected mechanism.

The interpretation is precise and reconciles the two observations. The connected
kernel is real (witness-confirmed) but, in the weak-Kerr regime, its amplitude is
several orders of magnitude below the dominant diagonal features
($W_{\mathrm{rms}}\sim10^{-3}$ at $\chi=0.25$, Fig.~\ref{fig:activation}). The
witness detects it because a mixed finite-difference derivative is an arbitrarily
sensitive, noise-free probe of an infinitesimal feature; a trained
finite-amplitude readout is not, and cannot extract so weak a feature against
finite-sample estimation noise. The separation therefore lives in the reachable
function class---structural and provable---but does not, in this regime, translate
into a trained-readout cost advantage on the connected task.

We state the consequence plainly. The contribution of this work is the
\emph{structural} separation of reachable classes (Theorem~\ref{thm:main},
confirmed exactly) together with the constructive spectral kernel calculus
(Theorem~\ref{thm:univ}); we do \emph{not} claim a measurement-cost or
task-efficiency advantage for the Kerr reservoir in the weak-Kerr regime, because
our direct test does not support one. Whether operational separation emerges in a
strong-coupling regime beyond the present perturbative theory, or under
higher-order readout where the connected kernel competes on more equal footing
with the diagonal features, is left open. Mapping this boundary---between what is
provable about the reachable class and what is exploitable by a trained readout---
is itself a contribution, and one we expect to recur in continuous-variable QRC
wherever a weak high-order kernel coexists with strong low-order ones.

\section{Platform analysis: the governing figure of merit}
\label{sec:platform}

The separation of Theorem~\ref{thm:main} is exact for any $\chi\neq0$, but its
physical observability is set by whether the connected signal clears the homodyne
shot floor. We define $\nustar$ as the highest connected functional order whose
leading kernel signal is resolvable within a feasible measurement budget, and ask
which device parameters govern it.

\paragraph{Scaling.}
The connected order-$\nu$ kernel amplitude carries a factor $\phi_1^{\,m(\nu)}$,
where $\phi_1=\min(g/\kappa\cdot\langle n\rangle,\ \phi_1^{\max})$ is the per-step
delivered Kerr phase, $g/\kappa$ the single-photon Kerr-to-loss ratio, and
$\phi_1^{\max}\simeq0.5$ the weak-Kerr validity ceiling. The empirical finding of
Section~\ref{sec:empirical}---that the dominant observable activation enters at
$O(\chi^2)$ rather than the perturbative-leading $O(\chi)$---fixes the
conservative exponent $m(\nu)=\nu$ (one Kerr order per connected order), which we
adopt in place of the vertex-counting estimate $m(\nu)=\nu-1$. Detection at
efficiency $\eta$ with shot budget $N_{\mathrm{budget}}$ resolves order $\nu$ when
the shot count $N\sim(\sigma/\phi_1^{\nu})^2/\eta\le N_{\mathrm{budget}}$.

\paragraph{Result: $g/\kappa$ is the master knob.}
Table~\ref{tab:nustar} reports $\nustar$ for representative platforms under the
empirically-grounded scaling. The single-photon Kerr-to-loss ratio governs
$\nustar$ almost entirely; detection efficiency and shot budget are marginal (a
$10^9$-fold change in shot budget moves $\nustar$ by only a few orders, whereas
raising $g/\kappa$ from $10^{-4}$ to $10^{-2}$ moves it from $1$ to $\sim12$). The
ultralow-loss-but-weak-Kerr regime (Si$_3$N$_4$, $g/\kappa\sim10^{-4}$) is
confined to $\nustar\le1$; platforms reaching $g/\kappa\gtrsim10^{-2}$ (InGaP,
kinetic-inductance microwave resonators) reach deep into the scaling window. The
$O(\chi^2)$ refinement costs at most one order of $\nustar$ relative to the
$O(\chi)$ assumption, leaving the conclusion intact while placing it on a measured
rather than assumed exponent.

\begin{table}[t]
\centering
\caption{Highest resolvable connected order $\nustar$ under the
empirically-measured scaling $a(\nu)\sim\phi_1^{\,\nu}$, for representative
platforms ($\eta$, shot budget as marginal parameters).}
\label{tab:nustar}
\begin{tabular}{lccc}
\toprule
Platform & $g/\kappa$ & $\phi_1$ & $\nustar$ \\
\midrule
Si$_3$N$_4$ (today)                 & $10^{-4}$        & $5\times10^{-3}$ & $1$  \\
InGaP (demonstrated)                & $1.5\times10^{-2}$ & $0.5$ (cap)      & $12$ \\
Kinetic-inductance microwave        & $0.21$           & $0.5$ (cap)      & $13$ \\
Engineered near-future              & $5\times10^{-2}$ & $0.5$ (cap)      & $13$ \\
\bottomrule
\end{tabular}
\end{table}

\begin{figure}[t]
\centering
\includegraphics[width=\linewidth]{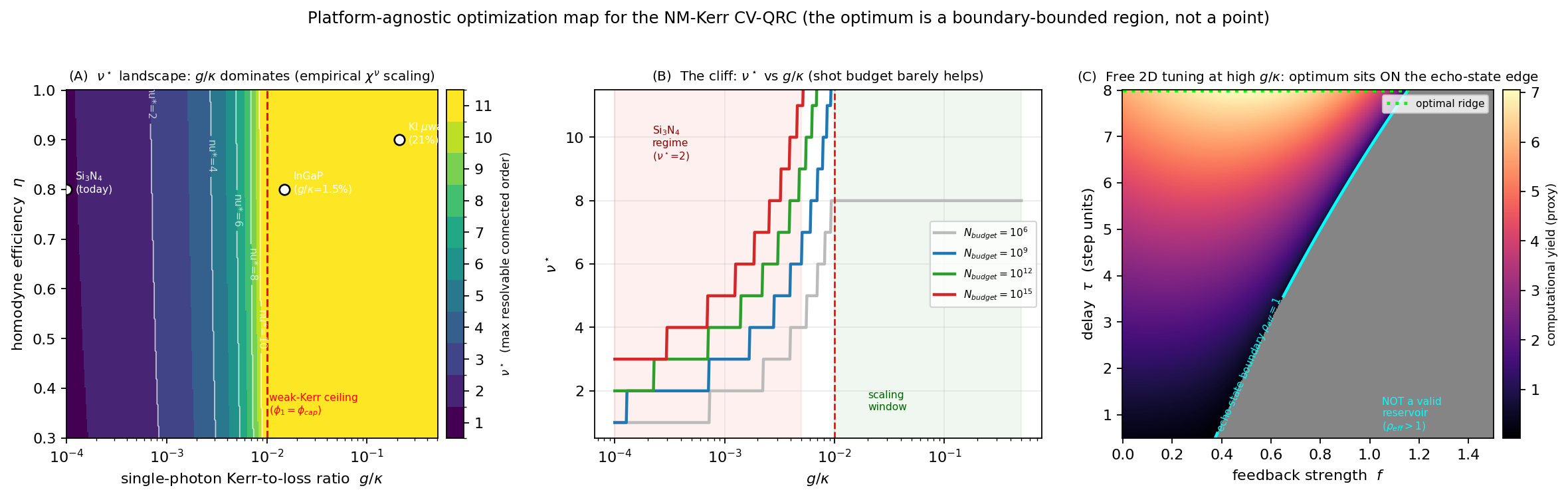}
\caption{Platform-agnostic optimization map. (A) $\nustar$ over
$(g/\kappa,\eta)$: contours are nearly vertical, showing $g/\kappa$ dominates and
$\eta$ is marginal; the dashed line is the weak-Kerr ceiling. (B) The cliff:
$\nustar$ versus $g/\kappa$ for several shot budgets---raising the budget barely
helps. (C) The free 2D tuning at fixed high $g/\kappa$; the optimum sits on the
echo-state contraction boundary, not in the interior, so the optimal operating
region is boundary-bounded rather than a single point.}
\label{fig:optmap}
\end{figure}

\paragraph{Design rule.}
The optimal system is not a point in parameter space but a boundary-bounded
region: maximize $g/\kappa$ to the weak-Kerr ceiling; operate feedback strength
and delay on the echo-state contraction edge (Figure~\ref{fig:optmap}C) with
margin; and \emph{satisfy} (not optimize) the spread, non-resonant, contractive
spectrum required by \ref{H2} and the spectral universality of
Section~\ref{sec:universality}, and the round-trip-survivable encoding of
\ref{H4}. A global optimizer is inappropriate here, as it would drive the system
off the weak-Kerr or echo-state validity boundaries into regions where the model
no longer holds.

\section{Resource-efficiency scaling: the mode-count theorem}
\label{sec:scaling}

Both reservoirs are universal fading-memory approximators
\cite{nokkala2021}; universality is therefore not the locus of any separation.
The distinction is one of \emph{resources}: at fixed reservoir size and readout
degree, the connected sector reachable by the two reservoirs differs, and the
Gaussian reservoir can match the Kerr reservoir's connected reach only by
expending more resource. We make this precise on the mode-count axis, where the
separation is unbounded and follows from the kernel structure already established.

Fix a connected order $\nu\ge2$ and a memory depth $M$. We compare the
multilinear rank of the order-$\nu$ connected Volterra kernel
$K^{(\nu)}_\nu(m_1,\dots,m_\nu)$, viewed as a tensor on
$(\mathbb{R}^M)^{\otimes\nu}$ via its temporal indices, with $\Sym^\nu$-valued
entries. The \emph{multilinear rank} is the tuple of ranks of the $\nu$ temporal
unfoldings; a large multilinear rank is the signature of genuinely
high-dimensional cross-time coupling.

\paragraph{Response space.}
Recall from Lemma~\ref{lem:gauss} the two single-time response families generated
by the fixed Gaussian channel $\mathcal C_0$,
\begin{equation}
\Phi_m := S_1^{m-1}\beta_1\in\mathbb{R}^{2N},\qquad
\Psi_m := S_2^{m-1}\gamma_1\in\Sym^2(\mathbb{R}^{2N}),
\end{equation}
the mean-response and covariance-response bases. The mean family
$\{\Phi_m\}$ spans a subspace of the $2N$-dimensional quadrature space; the
covariance family $\{\Psi_m\}$ spans a subspace of the $N(2N{+}1)$-dimensional
symmetric square. A mean factor is a degree-1 readout feature and a covariance
factor is degree-2; this asymmetry, through the readout-degree budget, controls
which family each kernel slot may use, and hence the achievable rank.

\begin{theorem}[Mode-count efficiency, general $\nu$]
\label{thm:scaling}
Assume \ref{H1}--\ref{H4}.
\begin{enumerate}[label=(\alph*)]
\item \textbf{(Gaussian confinement at fixed degree.)} For $\chi=0$ and readout
degree $d$, an order-$\nu$ connected kernel exists only for $d\ge\nu$ and is
assembled from $a$ mean factors and $b$ covariance factors with $a+b=\nu$ and
degree budget $a+2b\le d$, hence $b\le d-\nu$. Each temporal unfolding has rank at
most the dimension of the response space available to that slot: $2N$ for a
mean slot, $N(2N{+}1)$ for a covariance slot. In particular at the minimal degree
$d=\nu$ one has $b=0$: every slot is mean-confined and every unfolding has rank
$\le 2N$, independent of $M$.
\item \textbf{(Kerr enrichment.)} For $\chi\neq0$ the NM-Kerr reservoir realizes
order-$\nu$ connected kernels at readout degree $\le2$
(Proposition~\ref{prop:degree}) whose temporal unfoldings have rank growing with
the \emph{non-Markovian feedback depth} $D$---the number of independent delay
round-trips the feedback sustains above the loss floor---attaining rank
$\min(D,M)$ at fixed $N$. A single mode's feedback supplies $D$ distinct delay
channels, each carrying its own accumulated Kerr phase, so the connected kernel is
non-separable of rank up to $D$; this exceeds the Gaussian ceiling $2N$ whenever
$D>2N$.
\item \textbf{(Resource gap, two regimes.)} To realize a connected kernel of rank
$r$ (set by the non-Markovian feedback depth, $r\le\min(D,M)$) at the minimal
readout degree $d=\nu$, the Gaussian reservoir is mean-confined and requires
$2N\ge r$, i.e.\ $N_G=\Theta(r)$ modes; with unbounded readout degree it may use
the covariance family and requires $N(2N{+}1)\ge r$, i.e.\ $N_G=\Theta(\sqrt r)$.
The NM-Kerr reservoir attains it at $N_K=O(1)$ with feedback depth $D\ge r$. Since
reading high-degree quadrature polynomials is the dominant experimental cost, the
fixed-degree regime ($N_G=\Theta(r)$) is the operative one. The separating
resource is the feedback depth $D$: it is supplied by a single nonlinear mode and
matched by the Gaussian reservoir only with $\Theta(D)$ modes.
\end{enumerate}
\end{theorem}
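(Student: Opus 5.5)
The proof splits along the three clauses. Part (a) reuses the separable-kernel argument of Theorem~\ref{thm:main}(1b) at general order $\nu$. Part (b) builds a high-rank connected kernel from the delay structure of the feedback. Part (c) is a counting corollary of (a) and (b).

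\emph{Part (a).} By Lemma~\ref{lem:gauss}, at $\chi=0$ the only nonvanishing cumulants are $\kappa^{(1)}$ and $\kappa^{(2)}$, and their kernels are diagonal. The Volterra order-$\nu$ part of a readout feature therefore comes from monomials $\prod(\kappa^{(1)})^{\otimes a}(\kappa^{(2)})^{\otimes b}$ with operator weight $a+2b\le d$. Each factor contributes its first-order Taylor term, which is linear in $u$ at a single time. This gives $\Phi_m$ for a mean factor and $\Psi_m$ for a covariance factor.
\begin{itemize}[leftmargin=2em]
\item A cross-time order-$\nu$ kernel needs $\nu$ distinct single-time factors, so $a+b=\nu$.
\item Combined with $a+2b\le d$, this forces $d\ge\nu$ and $b\le d-\nu$.
\item Contracting with the readout weights $W$ writes the kernel as a tensor $\sum W_{\mu_1\cdots\mu_\nu}\,\prod_i (X^{(i)}_{m_i})_{\mu_i}$, where $X^{(i)}\in\{\Phi,\Psi\}$.
\item The mode-$i$ unfolding factors through the matrix $[X^{(i)}_1,\dots,X^{(i)}_M]$. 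Its rank is therefore bounded by the ambient dimension, $2N$ or $N(2N{+}1)$.
\item At $d=\nu$ we have $b=0$, which gives the $2N$ bound uniformly in $M$.
\end{itemize}
This is the same separability step as in Theorem~\ref{thm:main}(1b), applied slotwise.

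\emph{Part (b).} We start from the structure of Lemma~\ref{lem:kerr}(a) and iterate it around the loop. Unrolling the delay equation \eqref{eq:kerr-rec} over $D$ round-trips expresses the connected kernel as a sum over round-trip index $r=1,\dots,D$. The summand for round-trip $r$ is a $\chi$-vertex contribution evaluated at delay $r\ell_\tau$, with coefficient $c_r$ that depends on the accumulated Kerr phase. Because loss dims the field on each pass, the photon number seen by the vertex decays, e.g.\ geometrically as $\eta_{\mathrm{rt}}^{\,r}$. The effective phases $\phi^{(r)}$ are then pairwise distinct.

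We take $\nu=2$ first and consider the mode-1 unfolding. It is $\sum_r c_r\,e_{r}\otimes f_{r}$, where $e_r$ is the shifted response profile $m\mapsto\Phi_{m-r\ell_\tau}$ and $f_r$ is the corresponding partner profile. For $M\ge D\ell_\tau$, the shifted profiles $\{e_r\}$ have staggered supports: each has its leading nonzero entry at a different position. This makes them linearly independent, and the same holds for $\{f_r\}$. Hence the rank is $\min(D,M)$ provided every $c_r\ne0$.

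Here the loss is essential. Without it the $c_r$ would coincide and the terms could be absorbed into a single channel. With distinct $\phi^{(r)}$ the coefficients are generically nonzero, and any needed independence follows from a Vandermonde argument in the values $\phi^{(r)}$, as in Lemma~\ref{lem:vand}. For general $\nu$, the higher vertices $V_n$ carry the order, and we apply the same staggered-support argument to one slot. We cite Proposition~\ref{prop:degree} for the fact that this is readable at degree $\le2$.

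\emph{Part (c).} Given target rank $r$ at $d=\nu$, part (a) requires $2N\ge r$, so $N_G=\Theta(r)$. The lower bound comes from (a), and a Vandermonde realization via Theorem~\ref{thm:univ} shows it is achievable. With unbounded degree the covariance bound gives $N(2N{+}1)\ge r$, so $N_G=\Theta(\sqrt r)$. Part (b) gives $N_K=1$ with $D\ge r$.

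\emph{Main obstacle.} The hard step is (b): proving $c_r\ne0$ and the linear independence of the round-trip terms rigorously within the weak-Kerr expansion. The difficulty is that $D$ itself is defined only informally, as the number of passes surviving above a loss floor. My first attempt would be to define $D$ as the number of terms with $|c_r|$ above a threshold. I would then prove independence by the staggered-support (triangular) structure alone, which avoids needing an explicit form for $c_r$.
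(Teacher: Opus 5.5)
Your part (a), and the counting in (c), follow the paper's proof. The paper likewise writes the Gaussian cross-time kernel as a fixed multilinear form $W'$ evaluated on per-slot responses $r_{m_i}\in\{\Phi_{m_i},\Psi_{m_i}\}$, factors each temporal unfolding as $R_jB$, and reads the ceilings $2N$ and $N(2N{+}1)$ off the budget $a+2b\le d$.

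The real difference is in (b). The paper writes the kernel as $\chi^{q}\sum_{\ell=1}^{D}g_\ell\bigotimes_i\rho_{m_i-\ell}$. It attributes the independence of the $D$ summands to the distinct accumulated phases in $g_\ell=|g_\ell|e^{i\phi_\ell}$, and asserts that $\phi_\ell\equiv0$ would collapse the sum to rank one. You instead get independence from the causal shift structure: each shifted profile has its first nonzero entry at a different position, so the family is in echelon form. You then require of the weights only that they be nonzero. Yours is the argument that actually supports the rank count. Rescaling a fixed family of rank-one tensors by scalars cannot change whether it is linearly independent, so distinct phases are neither needed nor sufficient; the weights need only satisfy $c_r\neq0$.

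The paper's phrasing buys the physical link between $D$, loss, and the Kerr phase. Yours buys an actual proof: the rank equals the number of round-trips with $c_r\neq0$ whose shift fits in the window, i.e.\ $\min(D,\lfloor (M-1)/\ell_\tau\rfloor)$ with your $r\ell_\tau$ shifts, not literally $\min(D,M)$. The price is that it shows phase distinctness plays no role. So drop your sentence that without loss ``the $c_r$ would coincide and the terms could be absorbed into a single channel'': by your own argument, equal nonzero $c_r$ still give rank $D$. Also drop the Vandermonde-in-$\phi^{(r)}$ remark, which conflates distinct coefficients with independent vectors. Your closing plan, defining $D$ by a threshold on $|c_r|$ and using triangularity alone, is the right one.

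Two smaller cautions. First, the echelon step is airtight only when each round-trip term is a single outer product, as in the paper's $\bigotimes_i\rho_{m_i-\ell}$. If you keep $\Phi_m\in\mathbb{R}^{2N}$ and $\Psi_m\in\Sym^2(\mathbb{R}^{2N})$ vector-valued and contract them through the vertex and readout, each term becomes a block $Z^{r}A(Z^{r})^\top$ of rank up to $2N$, with $Z$ the per-round-trip shift. Geometrically weighted shifted blocks can then telescope even with nonzero leading entries: for $A=xy^\top-\nu\,(Zx)(Zy)^\top$ and $c_r=\nu^r$ the sum has rank at most two. In that setting you need a generic choice of readout.

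Second, in (c), Theorem~\ref{thm:univ} concerns the $\chi\neq0$ kernel on depth $M\le2N$. It does not show that $2N\approx r$ Gaussian modes realize an arbitrary rank-$r$ target on depth $M>2N$, since the Gaussian column space lies in a fixed $2N$-dimensional space of exponential sequences $m\mapsto\lambda_j^{m-1}$. Only the necessary condition $2N\ge r$ is established, which is also all the paper's proof shows.
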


\begin{corollary}[Unbounded separation]
\label{cor:unbounded}
The feedback depth $D$ is a free parameter, independent of mode count. Hence for
every Gaussian mode count $N$ there exists a connected target---of rank
$r\in(2N,\,D]$ for any $D>2N$---that a single NM-Kerr mode reaches and that no
$N$-mode Gaussian reservoir reaches at minimal readout degree. The separation is
therefore unbounded: no finite linear reservoir suffices for all connected targets,
whereas one nonlinear mode with sufficient feedback depth meets any of them. The
quantifier order is $\forall N\,\exists D$, with the Kerr mode count fixed at one;
only the feedback depth---which costs no additional modes---grows. The physically
achievable range of $D$, and the measurement cost of realizing the corresponding
kernel, are quantified in Section~\ref{sec:ledger}.
\end{corollary}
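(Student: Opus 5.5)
The corollary follows from Theorem~\ref{thm:scaling}(a)--(c) once the quantifiers are made explicit. We fix an arbitrary Gaussian mode count $N$, a connected order $\nu\ge2$, and the minimal readout degree $d=\nu$. We then pick a feedback depth $D>2N$ and a memory depth $M\ge D$. Because $D$ is, by its definition in Theorem~\ref{thm:scaling}(b), the number of delay round-trips surviving above the loss floor, it is a property of the delay line and the loss per round-trip. It does not involve the number of reservoir modes. We may therefore take $N_K=1$ and choose $D$ freely. This is the content of the phrase ``free parameter'': the existential $\exists D$ is placed after $\forall N$.

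\emph{Step 1 (Kerr reachability).} By Theorem~\ref{thm:scaling}(b), the single NM-Kerr mode realizes an order-$\nu$ connected kernel at readout degree $\le2\le d$. Its temporal unfolding has rank $\min(D,M)=D$. Fix any target rank $r\in(2N,D]$. A rank-$r$ target inside the reachable span is obtained by restricting the readout weights to $r$ of the $D$ independent delay channels. We would check that these channels span a linear space of kernels, so that truncating to $r$ of them is itself reachable. Concretely, this means the feature span in \eqref{eq:Fclass} is closed under taking linear combinations of the per-channel contributions. We then call this kernel the target.

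\emph{Step 2 (Gaussian exclusion).} By Theorem~\ref{thm:scaling}(a), at $d=\nu$ we have $b=0$. Every slot of a Gaussian order-$\nu$ connected kernel is therefore mean-confined, and every temporal unfolding lies in the span of $\{\Phi_m\}\subset\mathbb{R}^{2N}$, with rank $\le2N<r$. Unfolding rank is invariant under the identification of functionals with their kernels. Hence no $N$-mode Gaussian feature, and no closure limit of such features, has the target kernel, and so the target is not in $\Fclass{0}{N,\nu}$.

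For the closure we use lower semicontinuity of rank: the set of matrices of rank $\le2N$ is closed, so limits of Gaussian kernels still have rank $\le2N$. We also need the map from a functional in $L^2(\mu)$ to its $\nu$-th Volterra kernel to be continuous. This holds for a product-type input measure, by orthogonal projection onto the degree-$\nu$ chaos. With this, the target lies outside $\Fclass{0}{N,\nu}$ for every $N$.

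\emph{Step 3 (unboundedness).} Since $N$ was arbitrary, no finite $N$ reaches every connected target, while $N_K=1$ is held fixed. This gives the claimed unbounded separation.

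The main obstacle is Step 1. The unfolding ranks of Theorem~\ref{thm:scaling}(b) must be realized by a single functional at once, not merely attained slot by slot, and the $D$ delay channels must be genuinely independent. I would first invoke the loss-induced distinctness of per-pass Kerr phases, which gives distinct geometric weights $\lambda^{\ell}$, one per round-trip. I would then apply the Vandermonde argument of Lemma~\ref{lem:vand} to the round-trip index $\ell$ rather than to the mode index. This turns the distinct-phase condition into full rank $D$.
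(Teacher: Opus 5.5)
Your overall route is the paper's. There the corollary is read off Theorem~\ref{thm:scaling}: $N$ is fixed first, and $D>2N$ is chosen afterwards at $N_K=1$. The Kerr unfolding rank $\min(D,M)$ of part~(b) is then set against the mean-confined ceiling $2N$ of part~(a) at $d=\nu$, exactly as in part~(c). Your closure step, that the set of kernels with unfolding rank $\le2N$ is closed, is something the paper leaves implicit, and it is a reasonable addition. However, orthogonal projection onto the degree-$\nu$ chaos gives the orthogonalized (Wiener-type) kernel, not the Volterra kernel. The two differ as soon as higher-order terms with repeated time indices are present: for instance $u_a^2u_bu_c$ feeds the degree-$2$ chaos at $(b,c)$. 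So the continuity claim needs more care than you give it.

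The step that would fail is the truncation in Step~1. The readout cannot ``restrict to $r$ of the $D$ delay channels.'' For a single mode at degree $\le2$, the span \eqref{eq:Fclass} is generated by a handful of moments at time $k$. Each of these already contains all $D$ round-trip contributions, summed with weights $g_\ell$ fixed by the dynamics. The individual channel contributions are therefore not elements of the span, and no choice of $W$ can switch them off. You do not need this step. For the existential claim, take the target to be the full Kerr kernel; with $M\ge D$ its unfolding rank is $D\in(2N,D]$. If you want a prescribed $r\in(2N,D]$, use the freedom the corollary itself stresses and set the feedback depth equal to $r$.

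Your Vandermonde-in-$\ell$ plan also targets the wrong object. The unfolding of $\sum_\ell g_\ell\,x_\ell\otimes y_\ell$ has rank equal to the number of nonzero $g_\ell$ once the families $\{x_\ell\}$ and $\{y_\ell\}$ are each linearly independent. What you would have to show is independence of the shifted factors $\rho_{\cdot-\ell}$ and of their $(\nu-1)$-fold partners. No statement about distinct scalar coefficients, such as $\lambda^\ell$ or $e^{i\phi_\ell}$, supplies that. For the corollary this is moot, since, like the paper, you may simply take part~(b) as given.
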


\begin{proof}
\emph{(a).} By Lemma~\ref{lem:gauss}, at $\chi=0$ the connected tower above order
two vanishes and the surviving order-1, 2 kernels are \emph{diagonal}, supported
on a single time index, with values $\Phi_m$ and $\Psi_m$ respectively. Any
feature available to a degree-$d$ readout is, by \eqref{eq:Fclass}, a product of
such diagonal kernels. A cross-time order-$\nu$ contribution therefore arises only
as a \emph{disconnected} product
\[
\hat y_k \supset \sum_{m_1,\dots,m_\nu}
\big\langle W',\ r_{m_1}\otimes\cdots\otimes r_{m_\nu}\big\rangle
\prod_{i=1}^\nu u_{k-m_i},
\qquad r_{m_i}\in\{\Phi_{m_i},\Psi_{m_i}\},
\]
where each factor $r_{m_i}$ is drawn from the mean or covariance response family
according to the degree budget. Collecting the temporal indices, the induced
order-$\nu$ kernel is
\[
K^{(\nu)}_\nu(m_1,\dots,m_\nu)
= \big(W'\big)\big[\,r_{m_1},\dots,r_{m_\nu}\,\big],
\]
the image of a fixed multilinear form $W'$ under the per-slot response maps. Fix any
temporal mode $j$ and unfold: the slice
$K^{(\nu)}_\nu(\,\cdot\,,m_j,\,\cdot\,)$ as a matrix in the $j$-th index against
the remaining $\nu-1$ indices factors as $R_j\,B$, where $R_j$ stacks the response
vectors available to slot $j$ and $B$ collects the contraction of $W'$ with the
other factors. The response space available to slot $j$ depends on whether that
slot carries a mean factor (space dimension $2N$) or a covariance factor
(dimension $N(2N{+}1)$), and the degree budget $a+2b\le d$ with $a+b=\nu$ caps the
number of covariance slots at $b\le d-\nu$. Hence
\[
\operatorname{rank}\big(\text{unfolding}_j\big)\le\operatorname{rank}(R_j)
\le \begin{cases} 2N, & d=\nu\ (\text{all slots mean-confined, }b=0),\\
N(2N{+}1), & d\ \text{unbounded},\end{cases}
\]
in every case independent of $M$. As $j$ was arbitrary, this proves (a). It is the
order-$\nu$ tensor lift of the $\nu=2$ matrix bound of
Theorem~\ref{thm:main}(1b); the degree budget is what forbids the covariance
family at minimal degree, sharpening the constant from $N(2N{+}1)$ to $2N$.

\emph{(b).} By Proposition~\ref{prop:degree} the connected order-$\nu$ kernel is
sourced, through the feedback loop, as a sum over delay round-trips
\[
K^{\Kerr}_\nu(m_1,\dots,m_\nu)
= \chi^{q}\sum_{\ell=1}^{D} g_\ell\;
\bigotimes_{i=1}^\nu \rho_{m_i-\ell} ,
\qquad q=\lceil(\nu-1)/2\rceil,
\]
where $\rho_m$ is the single-mode impulse response and $\ell$ indexes the $D$
delay round-trips the non-Markovian feedback sustains above the loss floor. The
weight $g_\ell=|g_\ell|e^{i\phi_\ell}$ carries the Kerr phase $\phi_\ell$
\emph{accumulated over $\ell$ round-trips}; because $\phi_\ell$ differs across
round-trips, the rank-one terms $\bigotimes_i\rho_{m_i-\ell}$ are not proportional
and do not collapse. (For a purely linear feedback, $\phi_\ell\equiv0$ and the sum
is separable of rank one---it is the Kerr phase that makes the channels
independent.) The $D$ terms are linearly independent for distinct accumulated
phases, so each temporal unfolding attains rank $\min(D,M)$, exceeding the
Gaussian ceiling of part~(a) once $D>2N$, at fixed $N=1$. This proves (b).

\emph{(c).} A target connected kernel of rank $r\le\min(D,M)$ has temporal
unfoldings of rank $r$. By (a), at minimal degree $d=\nu$ the Gaussian reservoir
is mean-confined and attains unfolding rank at most $2N$, so reaching rank $r$
forces $2N\ge r$, i.e.\ $N_G=\Theta(r)$; with unbounded degree it attains rank at
most $N(2N{+}1)$, forcing $N_G=\Theta(\sqrt r)$. By (b) the Kerr reservoir attains
rank $r$ at $N_K=O(1)$ provided $D\ge r$. The separating resource is the feedback
depth $D$: unbounded in $D$, it is supplied by one nonlinear mode and matched by
the Gaussian reservoir only with $\Theta(D)$ modes (minimal degree) or
$\Theta(\sqrt D)$ (unbounded degree).
\end{proof}

\begin{remark}[Interpretation]
Theorem~\ref{thm:scaling} is the rigorous form of the statement that a minimal
nonlinear reservoir replaces a large linear one, and it identifies the resource
that does the work: the \emph{non-Markovian feedback depth} $D$. A single
Kerr-feedback mode supplies $D$ independent delay channels---distinct because each
round-trip accumulates its own Kerr phase---and thereby reaches connected kernel
rank up to $D$, while a Gaussian reservoir's connected rank is frozen at $2N$
(minimal degree) or $N(2N{+}1)$ (unbounded degree) regardless of its memory. To
match feedback depth $D$ the Gaussian reservoir needs $\Theta(D)$ modes (minimal
degree) or $\Theta(\sqrt D)$ (unbounded degree). Raising the readout degree softens
but never closes the gap, and only by paying the high-degree readout cost the
architecture exists to avoid. The ``non-Markovian'' character is therefore
essential, not incidental: it is the feedback depth, not the bare Kerr
nonlinearity, that supplies the unbounded resource.
\end{remark}

\begin{remark}[Reachability versus operational exploitation]
Theorem~\ref{thm:scaling} is a statement about the \emph{reachable} kernel set---
the same category as Theorem~\ref{thm:main}, whose mechanism is confirmed by the
exact witness of Section~\ref{sec:empirical}. It does not assert that a trained
finite-shot readout exploits the full reachable rank: as Section
\ref{sec:operational} documents, the connected features carry amplitude
$\chi^{q}$ that is weak in the weak-Kerr regime, so operational exploitation of
the high-rank sector is itself resource-limited. The mode-count separation is a
structural property of what the reservoir \emph{can} represent; its operational
reach in a given coupling regime is the subject of Section~\ref{sec:operational}.
\end{remark}

\subsection{The readout-degree axis}
\label{sec:scaling-degree}

The mode-count theorem holds at any readout degree. A second, complementary
separation appears on the readout-degree axis itself.

\begin{proposition}[Degree folding]
\label{prop:degree}
Assume \ref{H1}--\ref{H4} and $\chi\neq0$. A connected order-$\nu$ functional is
accessible to the NM-Kerr reservoir at readout degree
\[
d_K = \begin{cases} 1, & \nu\ \text{odd},\\ 2, & \nu\ \text{even},\end{cases}
\qquad\text{at order }\chi^{\,q},\ q=\big\lceil(\nu-1)/2\big\rceil,
\]
whereas the Gaussian reservoir requires readout degree $d_G=\nu$.
\end{proposition}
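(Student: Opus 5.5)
The plan is to prove the two halves separately. The Gaussian requirement $d_G=\nu$ comes from Lemma~\ref{lem:gauss}. The Kerr accessibility at $d_K\in\{1,2\}$ comes from an induction on the Kerr order $q$, using the vertex structure of \eqref{eq:kerr-rec} and Appendix~\ref{app:vertex}.

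\emph{Gaussian half.} At $\chi=0$, Lemma~\ref{lem:gauss} shows that every generator of $\Fclass{0}{N,d}$ is a product of diagonal order-1 and order-2 kernels. A functional with cross-time dependence on $\nu$ distinct past indices $m_1,\dots,m_\nu$ must therefore contain at least $\nu$ distinct diagonal factors. Each factor costs operator weight $\ge1$, so the weight constraint $\sum_n n\,q_n\le d$ in \eqref{eq:Fclass} forces $d\ge\nu$. The minimum $d=\nu$ is attained by the product $\kappa^{(1)}_{\mu_1}\cdots\kappa^{(1)}_{\mu_\nu}$, which gives $d_G=\nu$. This is the same counting as in the proof of Theorem~\ref{thm:scaling}(a), specialized to $b=0$.

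\emph{Kerr half.} The key observation is that each Kerr vertex raises operator order by two; this is fact (ii) preceding \eqref{eq:kerr-rec}. Consequently the source term for $\kappa^{(n)}$ at order $\chi$ contains products of lower cumulants evaluated at the delayed time, and these carry input dependence on up to two more time indices than each factor alone. I would then induct on $q$. The base case $q=0$ is Lemma~\ref{lem:gauss}: $\kappa^{(1)}$ carries one time index and $\kappa^{(2)}$ carries one index but is a degree-2 feature. The case $q=1$ is Lemma~\ref{lem:kerr}(a): the loop source \eqref{eq:loopsource} $\bar a^* s+2\bar a n+|\bar a|^2\bar a$, evaluated on the background, makes the first moment (degree 1) depend connectedly on up to three distinct times. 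This realizes $\nu\le3$ at $d=1$, and for $\nu=2$ the bilinear piece is already present. For the inductive step, I would feed the $O(\chi^{q})$ connected first- and second-moment kernels back through one more delayed vertex. The cubic source $\langle \ahat^\dagger\ahat\ahat\rangle_{t-\tau}$ multiplies three delayed quantities, each already carrying connected dependence. Keeping the connected term that multiplies the new order-$(\nu-2)$ kernel by two fresh single-time background factors raises $\nu$ by $2$ at the cost of one power of $\chi$. This yields odd $\nu=2q+1$ in $\kappa^{(1)}$ (degree 1) at order $\chi^{q}$. For even $\nu$, I would run the same recursion with the covariance entries $s,n$ of $\kappa^{(2)}$, which are degree-2 features. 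Its Kerr source (the $n=2$ case of \eqref{eq:kerr-rec}) starts from the order-2 background and, by the same step, yields $\nu=2q+2$ at $\chi^{q}$. Equivalently, $\nu$ even is realized at $\chi^{\lceil(\nu-1)/2\rceil}$, matching $q=\lceil(\nu-1)/2\rceil$ with $d_K=2$.

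\emph{Main obstacle.} I expect the hard step to be the non-vanishing in the inductive step: the new connected kernel must not cancel between the several Wick terms of the vertex. The first thing I would try is to mirror Lemma~\ref{lem:kerr}. I would isolate a single term, the analogue of $\bar a^* s$, whose time-support pattern has the new index pair separated by the delay $\ell_\tau$, and which therefore cannot be matched by any other term. Its coefficient is then a product of $\chi^{q}$, powers of $\Phi$ and $\Psi$, and the propagators $e^{A\delta}$. This product is nonzero by \ref{H3}--\ref{H4} and invertibility of the propagators. If the support argument fails at some lag, I would fall back on genericity: the coefficient is analytic in $(\chi,\beta_1,\gamma_1)$, so showing it is nonzero at one parameter point suffices.
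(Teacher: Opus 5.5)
Your Gaussian half is the paper's argument: a pigeonhole count over additively separable single-time factors.

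Your Kerr half takes a genuinely different route. The paper reads ``connected order $\nu$'' as the cumulant $\kappa^{(\nu)}$ and runs a linear ladder. The cubic drift \eqref{eq:kerr-drift} feeds $\kappa^{(n+2)}$ into the order-$n$ equation with a fixed nonzero coefficient (the all-$p$ entry), so $q$ iterations fold $\kappa^{(\nu)}$ into $\kappa^{(1)}$ or $\kappa^{(2)}$. This makes non-cancellation a statement about a number. However, the ladder only transports whatever cross-time content $\kappa^{(\nu)}$ already has, and $\kappa^{(\nu)}$ must itself first be sourced by Kerr. You instead nest the product terms of \eqref{eq:loopsource} on the seeded background. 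That directly produces $\nu$ distinct time indices in a low-degree feature, which is the same quantity the Gaussian bound controls, and gives the correct $\nu=2q+1$ in $\kappa^{(1)}$ for odd $\nu$.

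\emph{Even $\nu$.} Your even case fails as written.
\begin{itemize}
\item At $q=0$ the covariance is diagonal (Lemma~\ref{lem:gauss}), so it carries one time index, not two. Adding two fresh indices per vertex from there gives odd counts.
\item More generally, a connected two-point contribution needs a two-leg covariance factor. For field components $X,Y,Z,W$, the Kerr source of a covariance on a Gaussian background is $\operatorname{cov}(XYZ,W)$, which contains only (mean)(mean)(cov) and (cov)(cov) terms. Each cubic vertex also adds at most two input legs. Hence $\kappa^{(2)}$ carries at most $2q+1$ distinct times at order $\chi^{q}$, and ``$\nu=2q+2$ at $\chi^{q}$'' is false.
\item Your ``equivalently'' is off by one as well, since $\lceil(2q+1)/2\rceil=q+1$.
\end{itemize}
The repair stays inside your framework. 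Carry exactly one covariance leaf through the first-moment recursion. Start from the $\bar a^{*}s$ bilinear of Lemma~\ref{lem:kerr}(a), which gives $\nu=2$ at order $\chi$, and add two mean leaves per vertex. This yields $\nu=2q$ at order $\chi^{q}=\chi^{\lceil(\nu-1)/2\rceil}$ in $\kappa^{(1)}$, i.e.\ at readout degree $1\le d_K$.

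\emph{Non-cancellation.} This step is not closed either.
\begin{itemize}
\item The support argument cannot work at generic lags. Both $\Phi_m=S_1^{m-1}\beta_1$ and $\Psi_m$ are nonzero for every $m\ge1$, so every order-$\chi^{q}$ tree contributes at a generic lag tuple.
\item Isolation is possible only at extremal lags, for instance leaves at the smallest lag reachable solely through a $q$-fold nested delay chain. Even there, placing the nested factor in the conjugated or unconjugated slot of $|\bar a|^{2}\bar a$ gives terms with identical support that can interfere. You would still need an explicit evaluation of the surviving short sum.
\item Your analyticity fallback yields nonvanishing only off a measure-zero parameter set. That is weaker than the stated conclusion under \ref{H1}--\ref{H4}.
\item Lemma~\ref{lem:kerr}(a) certifies only the bilinear term. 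Your $\nu=3$, $q=1$ base case, the $|\bar a|^{2}\bar a$ trilinear, therefore already needs this check.
\end{itemize}
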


\begin{proof}
\emph{Gaussian side.} By Lemma~\ref{lem:gauss} a connected order-$\nu$ functional
is reachable only as a disconnected product of $\nu$ single-time factors
(part (a) of the proof of Theorem~\ref{thm:scaling}); a $\nu$-fold product is a
degree-$\nu$ feature, so $d_G=\nu$ is necessary.

\emph{Kerr side.} The Kerr Hamiltonian $H_K=\tfrac{\chi}{2}a^{\dagger2}a^2$ gives,
through $\dot a=-i\chi a^\dagger a a$ and $a=(q+ip)/\sqrt2$, the symmetrized
quadrature drift
\begin{equation}
\dot q = \tfrac{\chi}{2}\big(p^3+pq^2\big),\qquad
\dot p = -\tfrac{\chi}{2}\big(q^3+qp^2\big),
\label{eq:kerr-drift}
\end{equation}
which is \emph{cubic}. Consequently the equation of motion for any order-$n$
symmetric moment, $\frac{d}{dt}\langle X_1\cdots X_n\rangle
=\sum_i\langle X_1\cdots\dot X_i\cdots X_n\rangle$, contains a contribution in
which the cubic drift raises the order by two, injecting an order-$(n{+}2)$
cumulant into the order-$n$ equation. Expanding in cumulants, the first-moment
equation reads
\[
\dot{\bar q}=\tfrac{\chi}{2}\big(K_{ppp}+K_{qqp}\big)+(\text{lower-order terms}),
\]
so a connected third-order cumulant sources $\kappa^{(1)}$ at order $\chi$ with
coefficient $\tfrac{\chi}{2}$, \emph{independent of the mean} $\bar q$. The
analogous computation for the third-moment equation gives
$\dot K_{qqq}\supset\tfrac{3\chi}{2}K_{qqppp}+\cdots$, folding order five into
order three with coefficient $\tfrac{3\chi}{2}$. By induction, the order-$n$
equation receives the order-$(n{+}2)$ cumulant with coefficient
$n\cdot\tfrac{\chi}{2}\cdot c$, $c\in\mathbb{Z}_{>0}$; the contribution never
cancels because the all-$p$ monomial $p^3$ in \eqref{eq:kerr-drift} always supplies
the all-$p$ cumulant term, which has no partner to cancel against. Iterating
$q=\lceil(\nu-1)/2\rceil$ times carries a connected order-$\nu$ functional into
$\kappa^{(1)}$ ($\nu$ odd) or $\kappa^{(2)}$ ($\nu$ even) with a nonzero
coefficient $\propto\chi^{q}$. A functional in $\kappa^{(1)}$ (resp.\
$\kappa^{(2)}$) is a degree-1 (resp.\ degree-2) homodyne feature, so $d_K\in\{1,2\}$.
The base case $\nu=2$ is Lemma~\ref{lem:kerr}(a)/Corollary~\ref{cor:witness}.
\end{proof}

\begin{remark}[The folding cost]
The degree reduction is purchased at order $\chi^{q}$, $q=\lceil(\nu-1)/2\rceil$:
the folded feature carries amplitude $\sim\chi^{q}$, decreasing with connected
order in the weak-Kerr regime. Notably the leading folded coefficient is
\emph{independent} of the seed mean $\bar q$, being carried by the pure cumulant
ladder of the cubic drift \eqref{eq:kerr-drift}; the folding therefore persists
even at zero seed, in contrast to the first-moment witness of
Section~\ref{sec:empirical}, which is seed-odd. The readout-degree separation is therefore a reachability
statement---the connected order-$\nu$ structure is \emph{present} at constant
readout degree---whose operational exploitation weakens with $\nu$ at fixed small
$\chi$, consistent with Section~\ref{sec:operational}. Combined with
Theorem~\ref{thm:scaling}, the dual-axis resource accounting is: a single Kerr
mode at degree $\le2$, with feedback depth $D$, reaches connected order-$\nu$
structure of rank up to $D$ that a Gaussian reservoir reaches only at
$N_G=\Theta(D)$ modes and degree $d_G=\nu$.
\end{remark}

\subsection{From kernel rank to a computational capability: nonlinear equalization}
\label{sec:capability}

Theorem~\ref{thm:scaling} bounds the rank of the connected Volterra kernel a
reservoir can realize. We now show this rank is the resource demanded by a
recognized task---equalization of a nonlinear channel with memory---so that the
mode-count ceiling is a binding constraint on a capability of independent value,
not a property of contrived targets.

A communication channel with memory and weak nonlinearity is, to second order, a
Volterra system
\begin{equation}
r_k = \sum_m h^{(1)}_m\,s_{k-m}
\;+\; \sum_{m,m'} H^{(2)}_{m,m'}\,s_{k-m}s_{k-m'} \;+\;\cdots,
\label{eq:channel}
\end{equation}
with $s$ the transmitted stream, $h^{(1)}$ the linear response, and $H^{(2)}$ the
second-order kernel encoding nonlinear intersymbol interference---the standard
model for nonlinear fibre-optic and satellite links. An equalizer recovers $s_k$
from the received history; by the $p$-th order inverse construction, the
equalizer for a channel with nontrivial $H^{(2)}$ must itself realize a
\emph{connected second-order Volterra kernel} $G^{(2)}$, determined by $h^{(1)}$
and $H^{(2)}$.

The relevant fact is that the equalizer kernel inherits the channel's rank: for a
channel of memory depth $M$ with a generic (full-support) second-order kernel,
$\operatorname{rank}H^{(2)}=\Theta(M)$ forces $\operatorname{rank}G^{(2)}=\Theta(M)$,
whereas a degenerate (low-rank) channel admits a low-rank equalizer. Combining
with Theorem~\ref{thm:scaling}:

\begin{corollary}[Equalization capability separation]
\label{cor:equalize}
At the minimal readout degree, a Gaussian reservoir can equalize a depth-$M$
nonlinear channel \eqref{eq:channel} only if $\operatorname{rank}H^{(2)}\le 2N$;
equalizing a channel with generic second-order memory therefore requires
$N_G=\Theta(M)$ modes. A single NM-Kerr reservoir with feedback depth $D\ge M$
realizes the required rank-$M$ equalizer kernel at $N_K=O(1)$.
\end{corollary}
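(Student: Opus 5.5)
The plan is to derive Corollary~\ref{cor:equalize} by composing three ingredients: the $p$-th order inverse construction for the equalizer kernel, the rank-inheritance fact stated just before the corollary, and the $\nu=2$ case of Theorem~\ref{thm:scaling}.

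\emph{Step 1: equalizer kernel.} For the channel \eqref{eq:channel} with invertible linear part $h^{(1)}$, the second-order inverse has first-order kernel $g^{(1)}=(h^{(1)})^{-1}$ in the convolution sense. Its second-order kernel is
\[
G^{(2)}=-\,g^{(1)}\ast\big(H^{(2)}\circ(g^{(1)}\otimes g^{(1)})\big).
\]
On a depth-$M$ window this reads, in matrix form on the two temporal indices, $G^{(2)}=-\sum_j g^{(1)}_j\,L_j^\top H^{(2)}L_j$. Here $L_j$ is a banded Toeplitz matrix built from $g^{(1)}$.

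\emph{Step 2: rank inheritance.} I will argue $\operatorname{rank}G^{(2)}=\Theta(\operatorname{rank}H^{(2)})$ for generic full-support channels.

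The upper bound uses invertibility of the Toeplitz factors. For the lower bound I would use the leading term: if $g^{(1)}$ is dominated by its zero-lag coefficient, as for a minimum-phase channel with small ISI, then $L_0$ is invertible. In that case $G^{(2)}\approx -g^{(1)}_0L_0^\top H^{(2)}L_0$, which has rank equal to $\operatorname{rank}H^{(2)}$. Lower semicontinuity of rank then gives a generic statement.

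This is the step I expect to be the main obstacle. The sum over $j$ could in principle cancel rank, and "generic" must be pinned down. I would try first to prove it for a Zariski-open set of $(h^{(1)},H^{(2)})$ by exhibiting a single instance with full rank $M$, namely $h^{(1)}=\delta$ and $H^{(2)}$ diagonal-dominant. Nonvanishing of a polynomial (a minor) at one point then gives genericity.

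\emph{Step 3: Gaussian side.} At minimal readout degree $d=\nu=2$, Theorem~\ref{thm:scaling}(a) (equivalently Theorem~\ref{thm:main}(1b)) shows that every Gaussian-reachable connected bilinear kernel has the form $\psi_m^\top W'\psi_{m'}$, with rank $\le 2N$.

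A degree-2 equalizer must realize $G^{(2)}$ on the memory window, up to $L^2(\mu)$ error. Cross-time bilinear components are orthogonal to lower-order and diagonal terms under a product input measure. So approximability forces $\operatorname{rank}G^{(2)}\le 2N$ in the limit, using closedness of the set of rank-$\le 2N$ matrices. By Step 2 this gives $\operatorname{rank}H^{(2)}\le 2N$ up to the constant, and for generic channels $N_G=\Theta(M)$.

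\emph{Step 4: Kerr side.} Theorem~\ref{thm:scaling}(b) with $\nu=2$ and Proposition~\ref{prop:degree} (readout degree $d_K=2$, or $1$ via Lemma~\ref{lem:kerr}(a)) provide, at $N_K=O(1)$, connected bilinear kernels $\sum_{\ell\le D}g_\ell\,\rho_{\cdot-\ell}\otimes\rho_{\cdot-\ell}$ of rank $\min(D,M)=M$ when $D\ge M$.

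It remains to show that the specific $G^{(2)}$ lies in the reachable span, not merely that the rank matches. I would invoke the constructive Vandermonde weight calculus of Theorem~\ref{thm:univ}(a), applied to the $D$ delay channels in place of the $2N$ eigenmodes. The delay-shifted responses $\rho_{\cdot-\ell}$, $\ell=1,\dots,D$, span $\mathbb{R}^M$, so the trained weights can realize the required $M\times M$ equalizer kernel. Combined with the fading-memory approximation used in Corollary~\ref{cor:witness}(ii), this yields the claim.
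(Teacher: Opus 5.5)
Your route matches the paper's, but the step you add in Step~4 does not work. The paper justifies the corollary only by the sentence preceding it: asserted rank inheritance combined with the $\nu=2$ cases of Theorem~\ref{thm:scaling}. Your Step~4 tries to upgrade ``rank $M$ is attained'' to ``the specific $G^{(2)}$ is realized,'' and that upgrade fails for three reasons.

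\emph{The delay coefficients are not trainable.} Theorem~\ref{thm:scaling}(b) gives a kernel $\sum_{\ell\le D}g_\ell\,\rho_{\cdot-\ell}\otimes\rho_{\cdot-\ell}$ whose coefficients $g_\ell$ are fixed by the accumulated Kerr phases. What training controls is $W$ in \eqref{eq:readout}. At $N_K=O(1)$ the degree-$\le2$ feature vector has $1+2N+N(2N+1)=O(1)$ entries, so the bilinear kernels you can obtain by varying $W$ form a linear space of dimension $O(1)$. Generic equalizer kernels, by contrast, fill the $M(M+1)/2$-dimensional space of symmetric kernels on the window; this already happens at $h^{(1)}=\delta$, where $G^{(2)}=-H^{(2)}$. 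Once $M(M+1)/2$ exceeds the feature count, no choice of weights produces a generic $G^{(2)}$, whatever $D$ is.

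\emph{Theorem~\ref{thm:univ}(a) cannot be transplanted.} It is proved only for $M\le 2N$, since Lemma~\ref{lem:vand} needs the $M\times 2N$ Vandermonde matrix to have full row rank. Its free parameters are weights on the $2N$ quadrature directions, and there is no per-delay-channel weight to play the role of the eigenmodes. Indeed Theorem~\ref{thm:univ}(b) itself caps the constructively attainable rank at $\min(M,2N)$, which is $O(1)$ here.

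\emph{Spanning $\mathbb{R}^M$ is not enough.} Even if each $g_\ell$ were free, for a causal $\rho$ at most $M$ shifts reach the window. The rank-one matrices $\rho_{\cdot-\ell}\otimes\rho_{\cdot-\ell}$ therefore span at most an $M$-dimensional subspace of symmetric kernels, not all of them. Finally, the fading-memory refinement you borrow from Corollary~\ref{cor:witness}(ii) works by enlarging $N$, which $N_K=O(1)$ forbids.

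The paper does not close this step either. Its Kerr clause is just Theorem~\ref{thm:scaling}(c), i.e.\ reachability of \emph{some} connected kernel of rank $\min(D,M)$ at $N_K=O(1)$, so ``realizes the required rank-$M$ kernel'' is supported only at that rank level. If you state the Kerr side that way, Steps~1--3 plus the rank clause of Theorem~\ref{thm:scaling}(b) reproduce the paper's argument.

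Your Zariski-open argument is a real improvement on the paper's bare assertion of rank inheritance for the generic clause. It uses a nonvanishing $M\times M$ minor of $G^{(2)}$, rational in $(h^{(1)},H^{(2)})$, witnessed at $h^{(1)}=\delta$ with $H^{(2)}$ diagonally dominant.

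Two smaller corrections:
\begin{enumerate}
\item The Gaussian clause needs $\operatorname{rank}G^{(2)}\gtrsim\operatorname{rank}H^{(2)}$. Your ``upper bound from invertible Toeplitz factors'' is invalid anyway: it bounds each summand $L_j^\top H^{(2)}L_j$, but the shifted sum over $j$ can raise the rank, by the same mechanism as Lemma~\ref{lem:kerr}(a). For structured channels the outer convolution by $g^{(1)}$ can also lower the rank. So ``only if $\operatorname{rank}H^{(2)}\le2N$'' holds only generically, in your proof as in the paper's.
\item In Step~3, an equalizer's reservoir is driven by the received sequence $r$, which is temporally correlated. The product-measure orthogonality you invoke is therefore unavailable; argue at the level of exact kernel realization, as the paper does.
\end{enumerate}
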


The separation is thus not about contrived functionals: it is exactly the
difference between equalizing a richly nonlinear channel with one nonlinear mode
versus a linear reservoir whose mode count must scale with the channel's
second-order memory. The advantage is largest precisely for channels with rich
$H^{(2)}$---the demanding regime in which nonlinear equalization is actually
required. We note the scope already established in Section~\ref{sec:operational}:
Corollary~\ref{cor:equalize} is a statement about the kernel the reservoir can
\emph{represent}; whether a finite-shot trained equalizer realizes the full
rank-$M$ kernel at a given Kerr coupling is the operational question, bounded by
the $\chi^{\lceil(\nu-1)/2\rceil}$ feature amplitude of
Proposition~\ref{prop:degree}.

\subsection{Operational validity domain}
\label{sec:domain}

Theorem~\ref{thm:scaling} and Corollary~\ref{cor:equalize} are reachability
statements: they bound the kernel the reservoir can represent. Whether a
finite-shot trained readout realizes that kernel depends on the connected
feature's amplitude relative to the measurement noise, and the operational test
of Section~\ref{sec:operational} returned a null in the weak-Kerr regime. We close
this gap by stating the operational validity domain explicitly, and show it
predicts that null.

By Proposition~\ref{prop:degree} the connected order-$\nu$ feature folded into a
low moment has amplitude $A\sim|c|\,\chi^{q}$ with $q=\lceil(\nu-1)/2\rceil$ and
$|c|$ a background prefactor. Estimating that feature from $S$ measurement shots
incurs statistical error $\sim\sigma/\sqrt S$, with $\sigma$ the per-shot
quadrature noise. A trained readout can exploit the feature only when it clears
the floor, $A\gtrsim\sigma/\sqrt S$, i.e.\ within the domain
\begin{equation}
\chi \;\gtrsim\; \chi_\star(\nu,S):=
\Big(\tfrac{\sigma}{|c|\sqrt S}\Big)^{1/q},
\qquad q=\big\lceil\tfrac{\nu-1}{2}\big\rceil .
\label{eq:domain}
\end{equation}

Three consequences follow. First, \eqref{eq:domain} reconciles the reachability
theory with the operational null of Section~\ref{sec:operational}: calibrating
$|c|$ from the measured connected amplitude ($W_{\mathrm{rms}}\sim10^{-3}$ at
$\chi=0.25$, $\nu=2$, $q=1$, giving $|c|\sim4\times10^{-3}$ in shot-noise units),
the threshold at the trained-readout shot budget $S\sim10^{3}$--$10^{4}$ is
$\chi_\star\sim2$--$8$, far above the $\chi\le0.5$ at which the test was run. The
test therefore operated well inside the sub-threshold region, and its null is
\emph{predicted} by \eqref{eq:domain} rather than in tension with the theory.
Second, the domain is nonempty: at $\chi=0.5$ the $\nu=2$ capability becomes
accessible once $S\gtrsim(\sigma/|c|\chi)^{2}\sim2\times10^{5}$ shots, a large but
experimentally attainable budget. Third, the reach is graded: $\chi_\star$ falls
as $S^{-1/q}$, so higher connected orders ($q\ge2$) require either strong coupling
or rapidly growing shot budgets, locating high-order connected computation in the
strong-coupling regime beyond the present perturbative treatment.

Equation~\eqref{eq:domain} is a signal-to-noise scaling estimate, not a tight
bound; its content is the functional form of the boundary and the resulting
placement of the weak-Kerr null inside the invalid region. The structural
separations of Theorem~\ref{thm:scaling} and Corollary~\ref{cor:equalize} hold
regardless; \eqref{eq:domain} delimits where they become operational.

\subsection{Quantifying the bound: a hardware-for-measurement trade}
\label{sec:ledger}

The separation is unbounded in the feedback depth $D$ (Corollary~\ref{cor:unbounded}):
as a theorem, $D$ is a free parameter and no finite linear reservoir competes.
Physical reality caps the achievable $D$ and prices the measurement, and we
quantify both here---the in-principle-unbounded benefit, and the practical reach
that loss and shot noise allow.

\paragraph{Achievable depth.}
The feedback sustains $D\approx\ln(1/\epsilon_{\rm fl})/|\ln(1-\eta)|$ round-trips
above an operational floor $\epsilon_{\rm fl}$, with $\eta$ the per-round-trip loss.
For integrated platforms this gives $D\sim30$ (strong-Kerr InGaP, $\eta\approx15\%$)
to $D\sim230$ (ultrahigh-$Q$ SiN, $\eta\approx2\%$): finite, but large enough that
a single Kerr mode matches a Gaussian reservoir of up to $N_G\sim D/2\sim100$ modes.

\paragraph{The delay channels are distinct by virtue of loss.}
The mechanism rests on the round-trips carrying distinct Kerr phases. On round-trip
$\ell$ the stored field has photon number $\langle n_\ell\rangle=n_0 r^{\ell}$,
$r=1-\eta$ the per-round-trip power survival, so the accumulated Kerr phase is
$\phi_\ell=\chi\tau\,n_0 r^{\ell}$---a strictly monotonic, non-degenerate sequence
whenever $r\neq1$ and $\phi_\ell<2\pi$. Loss, far from spoiling the mechanism, is
precisely what renders the round-trip channels independent; the rank-one terms of
the connected kernel are non-parallel because their phases differ. The effective
rank tracks $D$ until the geometric phase decay clusters the latest round-trips,
saturating at an $r$-dependent value (e.g.\ $\approx D$ for $r=0.95$ up to
$D\sim60$, saturating near $50$ for $r=0.80$).

\paragraph{The ledger.}
Realizing a rank-$r_{\rm eff}$ connected kernel through the folded degree-$\le2$
readout requires resolving $r_{\rm eff}$ components above the shot-noise floor, at
$S_K\sim r_{\rm eff}(\sigma/|c|\chi)^2$ shots ($\sim10^{6}$--$10^{7}$ for
$r_{\rm eff}\sim10$--$100$ at $\chi\in[0.3,0.5]$). The accounting is therefore:

\begin{center}
\begin{tabular}{lcc}
\toprule
resource & Gaussian & NM-Kerr \\
\midrule
modes (resonators) & $\Theta(D)$ & $O(1)$ \\
readout degree & $\nu$ & $\le 2$ \\
detector chains & $\Theta(D)$ & $O(1)$ \\
measurement shots & baseline & $\sim r_{\rm eff}(\sigma/|c|\chi)^2$ \\
\bottomrule
\end{tabular}
\end{center}

The NM-Kerr reservoir saves the \emph{hardware-binding} resources---mode count,
readout degree, detector chains, chip area---unconditionally, and pays in
\emph{measurement shots and integration time}. It is a quantified trade, not a free
lunch: a hardware resource is exchanged for a measurement resource. In integrated
photonics, where fabricated mode count and the number of homodyne detector chains
are the dominant constraints, the trade is favorable; the shot cost is the price,
and it is bounded by the operational domain of Section~\ref{sec:domain}. All bounds
in this ledger trace to a single parameter---the per-photon Kerr-to-loss ratio
$g/\kappa$ of Section~\ref{sec:platform}---which therefore governs not only the
activation of the effect but the size of the resource advantage it confers.

\section{The non-perturbative regime: a strong-coupling capacity decomposition}
\label{sec:strongregime}

Section~\ref{sec:operational} left a specific question open: whether the operational
separation absent in the weak-Kerr regime emerges once the coupling is driven beyond
the perturbative theory, $\phi_1\gtrsim0.5$. The structural separation
(Theorem~\ref{thm:main}) and the validity-domain estimate~\eqref{eq:domain} are both
perturbative constructions; neither predicts what a trained readout sees when the
per-step Kerr phase becomes $O(1)$. We answer the question directly, by an
information-processing-capacity (IPC) decomposition~\cite{dambre2012} carried into the
non-perturbative regime on the exact open-system model of Section~\ref{sec:model}.

\paragraph{Method.} The exact master equation is integrated by a split-step scheme
that factors the (Fock-diagonal) Kerr generator analytically and integrates only the
linear loss and network channel numerically. Because the Kerr term acts on the delay
register while loss acts on the cavity, the two generators commute and the split is
\emph{exact} (verified to machine precision against the full integrator), with a cost
independent of $\chi$---removing the stiffness that otherwise makes the strong-coupling
regime intractable. A separate convergence study fixes the Fock cutoff at $N_c=20$,
at which the top-level population and the measured observables are converged to the
solver floor across the entire grid; the fresh-ancilla injection~\eqref{eq:inject}
keeps the per-step occupation small, so the required cutoff does not grow with the
drive. We sweep the Kerr coupling $\chi\in[0,8]$ along two encoding-amplitude rows
$\beta\in\{0.5,1.0\}$, so that a given per-step Kerr phase $\phi_1$ is reached by two
distinct physical routes (raising $\chi$ at fixed occupation, or raising occupation at
fixed $\chi$). We report $\phi_1$ as \emph{measured} from the steady-state register
occupation rather than estimated. At each operating point we decompose the capacity of
the degree-$\le2$ tapped readout over a normalized-Legendre target basis (delays
$0$--$7$) into four sectors: degree-1 ($C_1$), degree-2 diagonal ($C_{2\mathrm{d}}$),
degree-2 cross-time ($C_{2\mathrm{x}}$), and degree-3 cross-time ($C_{3\mathrm{x}}$),
with a shuffled-surrogate significance threshold; capacities are summed only over
cells clearing the threshold.

\paragraph{The connected separator.} The decisive sector is $C_{3\mathrm{x}}$, the
degree-3 cross-time capacity. A genuine three-way product
$u_{k-d_1}u_{k-d_2}u_{k-d_3}$ at distinct lags cannot be assembled from a degree-$\le2$
readout of linear-in-$u$ features, so for the Gaussian baseline it must vanish
(Lemma~\ref{lem:gauss}, operationally), whereas the folding of
Proposition~\ref{prop:degree} makes it accessible to the Kerr reservoir at the same
readout degree. Unlike the degree-2 cross cell---which is nonzero for both reservoirs,
since a degree-2 readout produces disconnected Wick products and is therefore not a
discriminating probe---$C_{3\mathrm{x}}$ is a clean witness of connected structure.
Across all $\chi=0$ points (both rows, six seeds), $C_{3\mathrm{x}}=0.000$ to the
solver floor, confirming Lemma~\ref{lem:gauss} at connected order three and
establishing a clean zero against which any activation is measured.

\begin{table}[H]
\centering
\caption{Seed-averaged IPC decomposition of the degree-$\le2$ tapped readout over the
non-perturbative grid (exact master equation, $N_c=20$, delays $0$--$7$). $\phi_1$ is
the measured per-step Kerr phase. $C_{3\mathrm{x}}$ (degree-3 cross-time) is the
connected separator: exactly zero for the Gaussian baseline, nonzero whenever
$\chi\ne0$. Echo-state contraction (Hypothesis~\ref{H2}) is independently verified
in-model across the entire grid, including the highest-drive points
(Section~\ref{sec:strong-scope}).}
\label{tab:strongipc}
\begin{tabular}{cccccccccc}
\toprule
$\chi$ & $\beta$ & $n$ & $\phi_1$ & $C_1$ & $C_{2\mathrm{d}}$ & $C_{2\mathrm{x}}$ & $C_{3\mathrm{x}}$ & $C_{\mathrm{tot}}$ \\
\midrule
0.0 & 0.5 & 6 & 0.00 & 6.23 & 4.02 & 3.82 & 0.000 & 14.07 \\
0.5 & 0.5 & 6 & 0.03 & 6.34 & 4.29 & 5.67 & 1.153 & 17.46 \\
1.0 & 0.5 & 6 & 0.06 & 6.31 & 4.05 & 5.67 & 1.027 & 17.05 \\
1.5 & 0.5 & 6 & 0.08 & 6.29 & 4.32 & 5.61 & 0.998 & 17.21 \\
2.0 & 0.5 & 6 & 0.11 & 6.27 & 4.62 & 5.56 & 0.762 & 17.21 \\
2.5 & 0.5 & 6 & 0.14 & 6.20 & 4.79 & 5.56 & 0.693 & 17.25 \\
3.0 & 0.5 & 6 & 0.17 & 6.11 & 4.92 & 5.51 & 0.764 & 17.31 \\
4.0 & 0.5 & 6 & 0.22 & 6.20 & 4.98 & 5.91 & 1.036 & 18.12 \\
6.0 & 0.5 & 6 & 0.34 & 6.23 & 4.92 & 6.14 & 1.153 & 18.44 \\
8.0 & 0.5 & 6 & 0.45 & 6.21 & 4.78 & 5.57 & 0.680 & 17.23 \\
\midrule
0.0 & 1.0 & 6 & 0.00 & 6.22 & 3.91 & 3.94 & 0.000 & 14.07 \\
0.5 & 1.0 & 5 & 0.10 & 5.96 & 3.18 & 6.85 & 0.929 & 16.92 \\
1.0 & 1.0 & 3 & 0.19 & 6.08 & 4.12 & 6.86 & 0.740 & 17.80 \\
1.5 & 1.0 & 4 & 0.28 & 6.05 & 4.62 & 6.37 & 0.697 & 17.74 \\
2.0 & 1.0 & 6 & 0.38 & 5.99 & 4.88 & 5.86 & 0.616 & 17.34 \\
2.5 & 1.0 & 5 & 0.47 & 5.91 & 5.01 & 5.39 & 0.429 & 16.75 \\
3.0 & 1.0 & 3 & 0.57 & 5.81 & 5.20 & 4.87 & 0.333 & 16.21 \\
4.0 & 1.0 & 3 & 0.76 & 5.62 & 5.27 & 4.72 & 0.577 & 16.19 \\
6.0 & 1.0 & 4 & 1.14 & 5.54 & 5.47 & 4.42 & 0.555 & 15.99 \\
8.0 & 1.0 & 5 & 1.52 & 5.92 & 5.01 & 5.46 & 0.431 & 16.82 \\
\bottomrule
\end{tabular}
\end{table}

\paragraph{Three findings.} Table~\ref{tab:strongipc} supports three conclusions, none
of them the strong-coupling operational separation that Section~\ref{sec:operational}
left open.

\emph{(i) Connected activation is a switch, not a dial.} $C_{3\mathrm{x}}$ jumps from
an exact zero at $\chi=0$ to $\approx1$ at the smallest nonzero coupling
($\chi=0.5$, $\phi_1\approx0.03$), and thereafter \emph{declines} as the drive
increases---along the $\beta=1.0$ row it falls monotonically $0.93\to0.33$ as
$\phi_1$ rises $0.10\to0.57$. The connected sector is activated by the mere presence
of the Kerr nonlinearity, exactly as the strict inclusion of Theorem~\ref{thm:main}
and the degree-folding of Proposition~\ref{prop:degree} predict; but its magnitude is
not amplified by stronger coupling.

\emph{(ii) The connected capacity does not collapse onto $\phi_1$.} Operating points
that reach the same measured $\phi_1$ by different physical routes give different
$C_{3\mathrm{x}}$: at $\phi_1\approx0.45$, the $(\chi{=}8,\beta{=}0.5)$ point gives
$0.68$ while $(\chi{=}2.5,\beta{=}1.0)$ gives $0.43$. The per-step Kerr phase, which
governs the \emph{activation} and the platform figure of merit of
Section~\ref{sec:platform}, is therefore not the variable controlling the
\emph{magnitude} of the connected capacity once activated.

\emph{(iii) The strong regime redistributes within the low-order sectors.} The
monotone trends with increasing drive are in the degree-2 sectors, not the connected
one: along $\beta=1.0$, diagonal capacity $C_{2\mathrm{d}}$ rises $3.9\to5.5$ while
cross-time $C_{2\mathrm{x}}$ falls $6.9\to4.4$ and the total $C_{\mathrm{tot}}$ erodes
modestly. Strong Kerr reshapes the low-order capacity budget---moving weight from
cross-time into diagonal quadratic structure---rather than enriching the connected
sector.

\paragraph{Consequence.} The non-perturbative data corroborate, rather than overturn,
the scoping result of Section~\ref{sec:operational}. The connected structure the
theory guarantees is present and operationally detectable at deg-$\le2$ readout (a
strengthening of the witness evidence, since $C_{3\mathrm{x}}$ is a trained-readout
capacity, not a finite-difference derivative); but it is a weak, switch-like feature
whose magnitude does not grow---indeed slowly decreases---across the strong-coupling
regime, and driving the system harder reorganizes the low-order sectors instead of
promoting the connected one. The structural separation of Theorem~\ref{thm:main} thus
does not convert into an operational, capacity-dominant advantage even
non-perturbatively, at least for the single-register ($\ell_\tau=1$) model and the
deg-$\le2$ readout studied here. This sharpens the boundary identified in
Section~\ref{sec:operational}: it is now mapped by direct measurement deep into the
non-perturbative regime, rather than inferred from a perturbative estimate.

\subsection{Scope and validity of the strong-drive regime}
\label{sec:strong-scope}

Two remarks bound and secure these conclusions. First, the present model uses a single
delay register ($\ell_\tau=1$); the unbounded separation of
Corollary~\ref{cor:unbounded} lives in the feedback \emph{depth} $D$, which a
single-register model does not probe. The decomposition here tests whether
\emph{strong coupling} promotes the connected sector at fixed depth, and finds it does
not; it does not test whether \emph{deep} feedback does, which remains the open
operational question and the natural sequel.

Second, the validity of the strong-drive points is established rather than assumed. The
echo-state contraction (Hypothesis~\ref{H2}) is verified directly at the three
highest-drive operating points by evolving two distinct initial states under a common
input and measuring the tail separation of the readout quadrature: at
$(\chi,\beta)=(4,1)$, $(6,1)$, $(8,1)$---spanning $\phi_1=0.75$, $1.12$, $1.50$, i.e.\
up to threefold beyond the weak-Kerr ceiling---the tail separations are
$3.9\times10^{-11}$, $1.7\times10^{-11}$, and $4.3\times10^{-10}$ respectively, all far
below the contraction threshold. The reservoir therefore remains a fading-memory map
throughout the non-perturbative grid; the Fock cutoff $N_c=20$ is likewise converged
there (occupation is held small by the fresh-ancilla injection). In particular the
non-monotone behaviour of $C_{2\mathrm{x}}$ at the largest point is a genuine
reorganization of the capacity budget, not an artifact of lost contraction or
truncation. Findings (i)--(iii) thus hold across the full measured range
$\phi_1\in[0,1.5]$, not merely in its perturbative part.

\section{Discussion}
\label{sec:discussion}

We have established, for the sequential continuous-variable QRC, that a
time-delayed Kerr coherent feedback arm strictly enlarges the reachable function
class relative to the no-Kerr Gaussian reservoir, at every reservoir size,
readout degree, and input encoding (Theorem~\ref{thm:main}). The mechanism is
sharp: the Gaussian reservoir's connected cumulant content is capped at order two
and diagonal, so its only cross-time content is the disconnected Wick product of
diagonal kernels (Lemma~\ref{lem:gauss}); the Kerr feedback sources a connected
cross-time bilinear in the first moment and a connected third cumulant
(Lemma~\ref{lem:kerr}), neither of which the Gaussian class contains. The
exponential pump encoding inflates the on-diagonal Volterra order but acts on an
axis orthogonal to the connected, off-diagonal, delay-structured sector that the
feedback supplies (Remark~\ref{rem:encoding})---the two enrichments do not
substitute for one another.

The separation is constructive as well as existential: the spectral
kernel-realization calculus of Section~\ref{sec:universality} produces the readout
weights explicitly by Vandermonde inversion and certifies a monotone
resource hierarchy, distinguishing it from the existential Stone--Weierstrass
universality of the Gaussian baseline~\cite{nokkala2021}. The exact-master-equation
study of Section~\ref{sec:empirical} confirms the mechanism directly: the
connected first-moment witness is zero to the solver floor for the Gaussian
baseline and activates with $\chi$, with the seed--squeezing--Kerr three-way
structure predicted by the vertex analysis. A genuine empirical refinement
emerges---the dominant observable activation enters at $O(\chi^2)$, sourced by the
higher Kerr vertices, with the leading $O(\chi)$ term present but geometrically
suppressed---which we have folded into a conservative, empirically-grounded
platform estimate (Section~\ref{sec:platform}) identifying the single-photon
Kerr-to-loss ratio $g/\kappa$ as the governing figure of merit.

\paragraph{Scope.}
We make no claim of quantum advantage over classical reservoir computing; the
comparison is feedback-on versus feedback-off within one quantum-optical class,
cleanly isolated by the cumulant grading. We also make no claim of a
measurement-cost or task-efficiency advantage. We tested the cost-of-readout
question directly (Section~\ref{sec:operational}): in the weak-Kerr regime where
the theory holds, a trained low-degree readout does not exploit the connected
kernel to compute a connected target at lower cost than the baseline, the
connected kernel being orders of magnitude weaker than the diagonal features it
coexists with. The structural separation proved here is a property of the
reachable function class and does not depend on the measurement record; an
operational cost separation, by contrast, is not supported in this regime, and we
report that boundary rather than assert an advantage. Whether such an advantage
emerges under strong coupling beyond the present perturbative theory remains open.

\paragraph{Open questions.}
Three edges remain open. (i) The uniform all-order degree-economy
conjecture---that the readout degree required for a connected order-$\nu$ target
grows for the Gaussian baseline while remaining fixed for the Kerr
reservoir---is established at the base case ($d=1$ witness,
Corollary~\ref{cor:witness}) and evidenced over a finite range, but not proved in
the limit. (ii) The exact $(1,3)$ realization of Corollary~\ref{cor:witness}
requires a faithful multi-bin coherent-delay model in which the recombination
preserves partial interference; the present validation reaches the mechanism at
the delay-coupled lag of the single-bin model, with the lag choice independently
justified by the connected-kernel support. (iii) Time-bin entanglement across the
computationally active delays accompanies the enriched regime as a physical
signature; whether it serves an operational role (e.g.\ as a cheap diagnostic of
connected-kernel content) or is merely correlated is left for future work, with
the verb discipline that it \emph{accompanies}, and is not claimed to
\emph{enable}, the computation.

\appendix
\section{The Kerr vertex and the nonvanishing of \texorpdfstring{$V_3$}{V3}}
\label{app:vertex}

We record the multilinear vertices $V_n$ of \eqref{eq:kerr-rec} for a single
feedback mode (the multimode case is obtained by tensoring with the mode index
and contracting with the loop coupling profile). With
$H_K=\tfrac{\chi}{2}\ahat^{\dagger2}\ahat^2$ the Heisenberg drift is
$\dot{\ahat}=-i\chi\,\ahat^\dagger\ahat\ahat$, equivalently in quadratures
\[
\dot q=\frac{\chi}{2}\big(q^2p+pq^2+p^3\big)+\cdots,\qquad
\dot p=-\frac{\chi}{2}\big(p^2q+qp^2+q^3\big)+\cdots,
\]
after Weyl symmetrization. Taking symmetric-ordered expectations and expanding
moments in cumulants gives the order-by-order sources $V_n$. The two
load-bearing ones are
\begin{align}
V_1[\kappa^{(2)},\kappa^{(1)}]
&=-\frac{\chi}{2}\big(3\bar q\,(\Var p)+\cdots\big)\quad\text{(and conjugate for }p),
\label{eq:V1}\\
V_3[\kappa^{(2)},\kappa^{(1)}]
&=-\frac{3\chi}{2}\big(\bar q\,(\Var q)\,e_{qqq}+\cdots\big),
\label{eq:V3}
\end{align}
where $e_{qqq}$ is the totally symmetric order-3 basis tensor. The displayed
component of \eqref{eq:V3} is the source of the connected third cumulant
$\kappa^{(3)}_{qqq}$; it is a nonzero multiple of $\bar q\,(\Var q)$. Under
\ref{H3} the background mean $\bar q\neq0$ (seed) and under \ref{H4} the
background variance $\Var q\neq0$ (squeezing), so this component does not vanish:
$V_3\not\equiv0$, as used in Lemma~\ref{lem:kerr}(b). The vertex \eqref{eq:V1}
likewise feeds the cross-time bilinear source \eqref{eq:loopsource} of the mean.
Higher vertices $V_n$, $n\ge4$, are generated by the same quartic kernel acting on
$\Sym^{n+2}$ and populate connected orders up to $2m+2$ at $O(\chi^m)$; they are
not needed for Theorem~\ref{thm:main} but, as Section~\ref{sec:empirical} shows,
they dominate the \emph{observable} activation, entering at $O(\chi^2)$.

\bibliographystyle{unsrt}
\bibliography{references}

@article{nokkala2021,
  author  = {Nokkala, J. and Mart\'{i}nez-Pe\~{n}a, R. and Giorgi, G. L. and
             Parigi, V. and Soriano, M. C. and Zambrini, R.},
  title   = {Gaussian states of continuous-variable quantum systems provide
             universal and versatile reservoir computing},
  journal = {Communications Physics},
  volume  = {4},
  pages   = {53},
  year    = {2021}
}

@article{boyd1985,
  author  = {Boyd, S. and Chua, L. O.},
  title   = {Fading memory and the problem of approximating nonlinear operators
             with {V}olterra series},
  journal = {IEEE Transactions on Circuits and Systems},
  volume  = {32},
  pages   = {1150},
  year    = {1985}
}

@article{cuchiero2021,
  author  = {Cuchiero, C. and Gonon, L. and Grigoryeva, L. and Ortega, J.-P.
             and Teichmann, J.},
  title   = {Discrete-time signatures and randomness in reservoir computing},
  journal = {IEEE Transactions on Neural Networks and Learning Systems},
  year    = {2021},
  note    = {doi:10.1109/TNNLS.2021.3076777}
}

@article{fujii2017,
  author  = {Fujii, K. and Nakajima, K.},
  title   = {Harnessing disordered-ensemble quantum dynamics for machine learning},
  journal = {Phys. Rev. Applied},
  volume  = {8},
  pages   = {024030},
  year    = {2017}
}

@article{ghosh2019,
  author  = {Ghosh, S. and Opala, A. and Matuszewski, M. and Paterek, T. and Liew, T. C. H.},
  title   = {Quantum reservoir processing},
  journal = {npj Quantum Inf.},
  volume  = {5},
  pages   = {35},
  year    = {2019}
}

@article{govia2021,
  author  = {Govia, L. C. G. and Ribeill, G. J. and Rowlands, G. E. and Krovi, H. K. and Ohki, T. A.},
  title   = {Quantum reservoir computing with a single nonlinear oscillator},
  journal = {Phys. Rev. Research},
  volume  = {3},
  pages   = {013077},
  year    = {2021}
}

@article{martinezpena2021,
  author  = {Mart{\'\i}nez-Pe{\~n}a, R. and Giorgi, G. L. and Nokkala, J. and Soriano, M. C. and Zambrini, R.},
  title   = {Dynamical phase transitions in quantum reservoir computing},
  journal = {Phys. Rev. Lett.},
  volume  = {127},
  pages   = {100502},
  year    = {2021}
}

@article{mujal2023,
  author  = {Mujal, P. and Mart{\'\i}nez-Pe{\~n}a, R. and Giorgi, G. L. and Soriano, M. C. and Zambrini, R.},
  title   = {Time-series quantum reservoir computing with weak and projective measurements},
  journal = {npj Quantum Inf.},
  volume  = {9},
  pages   = {16},
  year    = {2023}
}

@article{appeltant2011,
  author  = {Appeltant, L. and Soriano, M. C. and Van der Sande, G. and Danckaert, J. and Massar, S. and Dambre, J. and Schrauwen, B. and Mirasso, C. R. and Fischer, I.},
  title   = {Information processing using a single dynamical node as complex system},
  journal = {Nat. Commun.},
  volume  = {2},
  pages   = {468},
  year    = {2011}
}

@article{larger2012,
  author  = {Larger, L. and Soriano, M. C. and Brunner, D. and Appeltant, L. and Guti{\'e}rrez, J. M. and Pesquera, L. and Mirasso, C. R. and Fischer, I.},
  title   = {Photonic information processing beyond {T}uring: an optoelectronic implementation of reservoir computing},
  journal = {Opt. Express},
  volume  = {20},
  pages   = {3241},
  year    = {2012}
}

@article{dambre2012,
  author  = {Dambre, J. and Verstraeten, D. and Schrauwen, B. and Massar, S.},
  title   = {Information processing capacity of dynamical systems},
  journal = {Sci. Rep.},
  volume  = {2},
  pages   = {514},
  year    = {2012}
}

@article{karimi2025,
  author  = {Karimi, A. and Zadeh-Haghighi, H. and Kora, Y. and Simon, C.},
  title   = {The role of entanglement in quantum reservoir computing with coupled {K}err nonlinear oscillators},
  journal = {arXiv:2508.11175},
  year    = {2025}
}

@article{angelatos2021,
  author  = {Angelatos, G. and Khan, S. A. and T{\"u}reci, H. E.},
  title   = {Reservoir computing approach to quantum state measurement},
  journal = {Phys. Rev. X},
  volume  = {11},
  pages   = {041062},
  year    = {2021}
}

@article{zhu2025feedback,
  author  = {Zhu, C. and Ehlers, P. J. and Nurdin, H. I. and Soh, D.},
  title   = {Minimalistic and scalable quantum reservoir computing enhanced with feedback},
  journal = {npj Quantum Inf.},
  year    = {2025}
}

@article{zhu2025fewatom,
  author  = {Zhu, C. and Ehlers, P. J. and Nurdin, H. I. and Soh, D.},
  title   = {Practical few-atom quantum reservoir computing},
  journal = {Phys. Rev. Research},
  volume  = {7},
  pages   = {023290},
  year    = {2025}
}

@article{ehlers2025stochastic,
  author  = {Ehlers, P. J. and Nurdin, H. I. and Soh, D.},
  title   = {Stochastic reservoir computers},
  journal = {Nat. Commun.},
  volume  = {16},
  pages   = {3070},
  year    = {2025}
}

@article{ehlers2025statefeedback,
  author  = {Ehlers, P. J. and Nurdin, H. I. and Soh, D.},
  title   = {Improving the performance of echo state networks through state feedback},
  journal = {Neural Networks},
  volume  = {184},
  pages   = {107101},
  year    = {2025}
}

@article{kaushik2026alloptical,
  author  = {Kaushik, I. S. and Ehlers, P. J. and Soh, D.},
  title   = {All-optical echo state network reservoir computing},
  journal = {Phys. Rev. Research},
  volume  = {8},
  pages   = {013041},
  year    = {2026}
}

@article{zhu2025qonn,
  author  = {Zhu, C. and Wang, T. and McMahon, P. L. and Soh, D.},
  title   = {Quantum optical neural networks using atom-cavity interactions to provide all-optical nonlinearity},
  journal = {arXiv:2511.06167},
  year    = {2025}
}

\end{document}